\newtheorem{thm}{Theorem}
\newtheorem{prop}{Proposition}
\newtheorem{cor}{Corollary}
\theoremstyle{definition}
\newtheorem{defn}{Definition}[section]
\newtheorem{example}{Example}[section]
\theoremstyle{remark}
\newtheorem{rem}{Remark}
\newcommand{\ket}[1]{\vert#1\rangle}
\newcommand{\bra}[1]{\langle#1\vert}
\newcommand{\bracket}[2]{\langle#1|#2\rangle}
\newcommand{\tr}{\mathrm{Tr}}
\newcommand{\prob}{\mathrm{Prob}}
\newcommand{\ii}{\mathrm{i}}
\begin{document}

\title[On Possibilistic Conditions to Contextuality and Nonlocality]{On Possibilistic Conditions to Contextuality and Nonlocality}

\author{Leonardo Santos}
\email{leo\_vieira@usp.br}
\author{Barbara Amaral}%
\affiliation{ 
Department of Mathematical Physics, Institute of Physics, University of São Paulo, Rua do Matão 1371, São Paulo 05508-090, São Paulo, Brazil
}%

\date{\today}

\begin{abstract}
Contextuality and nonlocality are non-classical properties exhibited by quantum statistics whose implications profoundly impact both foundations and applications of quantum theory. In this paper we provide some insights into logical contextuality and inequality-free proofs. The former can be understood as the possibility version of contextuality, while the latter refers to proofs of quantum contextuality/nonlocality that are not based on violations of some noncontextuality (or Bell) inequality. The present work aims to build a bridge between these two concepts from what we call possibilistic paradoxes, which are sets of possibilistic conditions whose occurrence implies contextuality/nonlocality. As main result, we demonstrate the existence of possibilistic paradoxes whose occurrence is a necessary and sufficient condition for logical contextuality in a very important class of scenarios. Finally, we discuss some interesting consequences arising from the completeness of these possibilistic paradoxes.
\end{abstract}

\maketitle

\section{Introduction}\label{sec:1}

Contextuality is a property of non-classical statistics that refers to the impossibility of their reproduction by models in which measurement outcomes revel pre-existing system properties whose values are independent of which (or whether) other compatible measurements are jointly performed. When these statistics are obtained from scenarios with space-like separations, contextuality is called nonlocality. Since the seminal works by Bell \cite{Bell64,Bell66} and Kochen-Specker \cite{KS67}, it is known that quantum statistical predictions can be contextual/nonlocal. In fact, it is currently known that contextuality and nonlocality are fundamental features of quantum theory with important implications on foundations \cite{Amaral14,CSW10}, computation \cite{HWVE14,R13,DGBR14,FRB18}, and communication \cite{Ekert91,BHK05}.

The assumption of noncontextuality (NC) imposes strong constraints on the possible empirically observed probabilities. A powerful way to express such restrictions is through linear inequalities that are obeyed whenever a description in terms of NC models is possible. Such inequalities generalize the notion of Bell inequalities for nonlocality, and they are usually called noncontextuality inequalities (NCI) \cite{ATC18}. Currently, the best known instances of NCIs are those proposed by Clauser-Horne-Shimony-Holt (CHSH) \cite{CHSH69} and Klyachko-Can-Binicioglu-Shumovsky (KCBS) \cite{KCBS08}, since both play an almost paradigmatic role in quantum foundations and quantum information theory \cite{Scarani19,BCPSW14,ATC18,BCGKL}. Furthermore, any set of NCIs that defines an H-representation for the NC polytope provides also necessary and sufficient conditions for contextuality \cite{ATC18}.

A different and more intuitive approach to obtain contradictions between quantum predictions and NC models is by using the possibilistic information of quantum probabilistic data. In other words, only the information of which results are possible/impossible can be sufficient to demonstrate its contextual character. Such an approach gave rise to so-called "inequality-free proofs", pioneered by Heywood and Redhead \cite{HR83}; Greenberger, Horne, Shimony, and Zeilinger \cite{GHZ89,GHSZ90}; and Hardy \cite{Hardy92,Hardy93}. We highlight the latter, which stands out for its simplicity and generality, being considered the simplest or the best proof of quantum nonlocality \cite{Mermin95}. The fundamental Hardy's ideas were to demonstrate how a small set of possibilistic conditions implies nonlocality, and then explicit show how such conditions can be realized within a quantum mechanical system.

Several generalizations and developments of Hardy's ideas have been proposed in the last decades \cite{GK98,Cereceda04,CCXSWK13,ACY16,MZXSGYC18}. In particular, we highlight the contribution due to Cabello \textit{et al.} \cite{CBTCB13}. In that paper, the authors proposed a "Hardy-like proof" to quantum contextuality in a scenario without space-like separation, namely the KCBS scenario \cite{CBTCB13,KCBS08}. In addition to theoretical developments, inequality-free proofs for quantum contextuality and nonlocality have been subject to several experimental verification \cite{BBMH97,BCMM05,YMZXXSCXLG19,MANCB14,CZWWFK17}.

A formalism to deal with contextuality and nonlocality in a unified framework was proposed by Abramsky and Brandenburger in Ref. \cite{AB11}. The formalism is mainly based on a generalization of Fine's theorem \cite{Fine82}, in such a way that contextuality means the impossibility of consistent global descriptions of probabilistic data. More precisely, using some insights provided by sheaf theory, the authors state contextuality as a phenomenon that arise whenever local consistence does not imply global consistence in a set of distributions (that is, normalized functions from a nonempty set to a commutative semiring). The usual notion is recovered when sets of probability distributions are considered. When sets of possibility distributions are considered (\textit{i.e.} when the semiring is the Boolean), we deal with the concept of logical contextuality (LC). In the present paper, this concept will have a central role, since it provides a very systematic and general way of defining (non)contextuality from the possibilistic structure of probabilistic data.

Both inequality-free proofs and LC are related with violations on logical constrains imposed by the assumption of NC. The connection between these two concepts, on the other hand, is not yet so clear. In an inequality-free proof we have a set of possibilistic conditions that implies contextuality. In the present paper, we call any of these sets of conditions a possibilistic paradox (PP). It is clear that the occurrence of some PP implies LC, suggesting that these are the possibilistic analogue of NCI. To make this analogy more accurate and for a better understanding of LC, however, some questions still need to be addressed. The main is the following: what are (if any) set(s) of PP(s) whose occurrence is a necessary and sufficient condition for LC? A partial answer to this question was proposed by Mansfield and Fritz \cite{MF12} for some bipartite Bell scenarios from a generalization of the PP proposed in the Hardy's proof \cite{Hardy92,Hardy93}.

In the present paper we generalize the results of Ref. \cite{MF12}. That is, we demonstrate that the occurrence of a specific set of PPs is a necessary and sufficient condition for LC in an important class of scenarios, namely the scenarios whose contexts have at most two dichotomic measurements. Furthermore, we discuss some interesting consequences arising from the completeness of the proposed sets of PPs. The PPs we propose have a very clear and friendly form, and they are defined as $n$-cycle generalizations of those present in the Hardy \cite{Hardy92,Hardy93} and Cabello \textit{et al.} \cite{CBTCB13} inequality-free proofs. As a consequence of this result, we conclude that the only strongly contextual behaviors for the $n$-cycle scenarios are analogously to the Popescu-Rohrlich boxes. Moreover, with a minor modification of the proposed PPs, its completeness still holds for the $n$-cycle scenarios with non-dichotomic measurements. Finally, we show that for all $n\geq 4$ it is possible to construct a "Hardy-like proof" to quantum contextuality in the $n$-cycle scenarios.

We have organized the paper as follows: in Sec. \ref{sec:II} we review the definitions of compatibility scenarios, behaviors, and contextuality following the compatibility hypergraph approach; in
Sec. \ref{sec:III} we present the notion of logical contextuality; in Sec. \ref{sec:IV} we present the our results. We finish the work with a discussion in Sec. \ref{sec:V}.

\section{Contextuality: Compatibility Hypergraph Approach}\label{sec:II}

Suppose we have a hypothetical physical system on which a finite set of measurements can be performed. Every available measurement can be labeled by an element in a finite set $X$, and when it is performed a macroscopic effect is observed. Consider that there are only a finite number of macroscopically distinguishable outcomes, in such a way that we can label them by elements in a finite set $O$. Moreover, some sets of measurements can be jointly performed (called compatible sets), while others cannot (called incompatible sets). The information of which measurements are available, its possible outcomes, and which compatible sets we choose to jointly perform defines a compatibility scenario and its associated compatibility hypergraph \cite{ATC18}. 

\begin{defn}\label{defn:II.1}
A \textit{compatibility scenario} (or just a scenario) is defined by triple $(X,\mathcal{C},O)$, where $X$ and $O$ are finite sets, and $\mathcal{C}$ is a family of subsets of $X$ such that $\cup \mathcal{C}=X$, and $C\subseteq C'$ implies $C=C'$ whenever $C,C'\in\mathcal{C}$. The \textit{compatibility hypergraph} of a scenario is the hypergraph whose set of vertices is $X$ and the set of hyperedges is $\mathcal{C}$.
\end{defn}

Given a compatibility scenario $(X,\mathcal{C},O)$, each element of $\mathcal{C}$ defines a maximal set of compatible measurements, called a \textit{context}. The family $\mathcal{C}$ is usually called the \textit{compatibility cover} of the scenario. Joint outcomes for the measurements in a context $C$ can be represented by functions from $C$ to $O$, and the set of these functions is denoted by $O^C$. Since $O^C\simeq O^{|C|}$, joint outcomes can also be represented by strings with $|C|$ elements, one for each element of $C$, that is $(s(m):m\in C)$. In what follows, it will be useful to keep both representations in mind.

Given a joint outcome $s\in O^C$, if we want to refer to a specific information concerning a subset of measurements $U\subseteq C$, then we just need to restrict the function $s$ to the subset $U$. Following the notation adopted in Ref. \cite{AB11}, we denote this restriction by $s|_U$. Intuitively, this operation transforms the string $(s(m):m\in C)$ into $(s(m):m\in U)$.

\begin{rem}
In what follows, if $(X,\mathcal{C},O)$ is a compatibility scenario and $\Omega\subseteq X$ is non-empty, $O^\Omega$ will always denote the set of functions from $\Omega$ to $O$.
\end{rem}

\begin{figure}
\centering
\includegraphics[angle=0,scale=0.5]{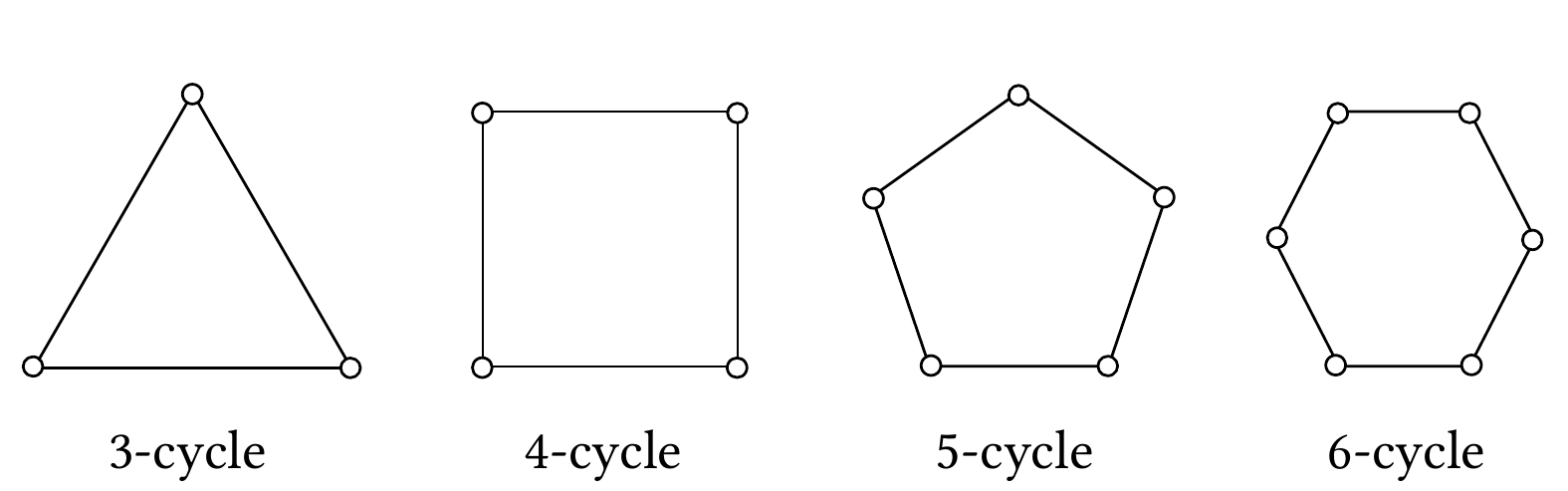}
\caption{Compatibility graph for $n$-cycle scenarios [Example \ref{example:II.1}].}
\label{fig:1}
\end{figure}
\begin{figure}
\centering
\includegraphics[angle=0,scale=0.5]{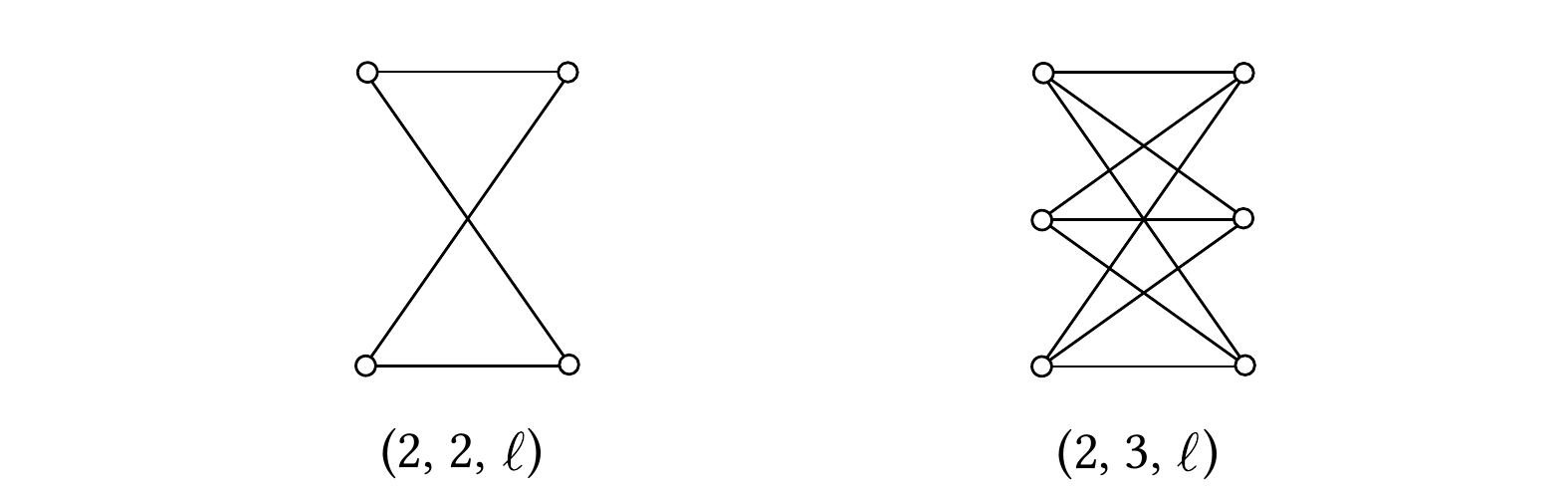}
\caption{Compatibility graph for bipartite Bell-type scenarios $(2,k,\ell)$ [Example \ref{example:II.2}].}
\label{fig:2}
\end{figure}

\begin{example}[$n$-cycle scenarios]\label{example:II.1}
The $n$-cycle scenarios are defined by $n\geq 3$ dichotomic measurements, $X:=\{M_1,\dots,M_n\}$, $O:=\{0,1\}$, with contexts $C_i:=\{M_i,M_{i+1}\}$, where the sum in index is taken modulo $n$. For $n=4$ and $n=5$ we have, respectively, the well known CHSH and KCBS scenarios \cite{CHSH69,KCBS08}.
\end{example}

\begin{example}[Bell scenarios]\label{example:II.2}
A Bell scenario is defined by three natural numbers $(n,k,\ell)$, where $n\geq 2$ is the number of spatially separated parts, $k\geq2$ is the number of available measurement of each part, and $\ell\geq2$ is the number of outcomes for each measurement. The contexts are constructed taking exactly one measurement of each part.
\end{example}

A compatibility scenario is said to be \textit{simple} if its contexts have at most two measurements, \textit{e.g.} bipartite Bell scenarios $(2,k,\ell)$ and $n$-cycle. In this case, the associated compatibility hypergraph is just a graph (see Fig. \ref{fig:1} and Fig. \ref{fig:2}). If the compatibility hypergraph of a scenario has no cycles as induced subgraphs we then say that the scenario is \textit{acyclic}.

A single measurement event in general does not provide sufficient information to compare experimental results with theoretical predictions. In fact, a measurement event provides one (and only one) joint outcome, while (in general) we "only" have probability distributions. The experimenter needs to repeat the same experimental procedure (preparation followed by measurements) a number of times in order to estimate probabilities by relative frequencies.

Given a compatibility scenario $(X,\mathcal{C},O)$, we say that $p$ is a probability distribution for the context $C\in\mathcal{C}$ if it is a normalized function from the set of joint outcomes $O^C$ to the non-negative real numbers, that is, $p:O^C\to \mathbb{R}^+$ such that
\begin{eqnarray}\label{eq:1}
\sum_{s\in O^C} p(s)=1.
\end{eqnarray}
We denote by $\prob(O^C)$ the set of probability distributions for the context $C\in \mathcal{C}$. If $U\subseteq C$ is a subset of measurements of the context $C$, we then define the marginal probability distribution for $U$ given $p\in \prob(O^C)$ as
\begin{eqnarray}\label{eq:2}
p|_U(u):=\sum_{s\in O^C \hspace{0.05cm}:\hspace{0.05cm} s|_U=u} p(s),
\end{eqnarray}
for all $u\in O^U$. That is, the summation in the above equation is over the joint outcomes $s\in O^C$ for the context $C$ compatible with a given information provided in the subset $U\subseteq C$ by $u\in O^U$, that is $s|_U=u$.

The probabilistic information obtained from a physical system prepared in some way given a compatibility scenario defines what we call a \textit{behavior}.

\begin{defn}\label{defn:II.2}
A behavior for a scenario $(X,\mathcal{C},O)$ is defined by a set of probability distribution over $O^C$, one for each $C\in \mathcal{C}$, \textit{i.e.}: $\mathrm{B}=\{p_C\in \prob(O^C):C\in \mathcal{C}\}$.
\end{defn}

The behaviors we will consider obey a generalized non-signaling condition, which we call \textit{nondisturbance condition} (ND).

\begin{defn}\label{defn:II.3}
A behavior $\mathrm{B}=\{p_C\in \prob(O^C):C\in \mathcal{C}\}$ for a scenario $(X,\mathcal{C},O)$ obeys the nondisturbance condition if marginal distributions agree on overlapping contexts, \textit{i.e.}
\begin{eqnarray}\label{eq:3}
\sum_{s\in O^C \hspace{0.05cm}:\hspace{0.05cm} s|_{C\cap C'}=u}p_C(s)=\sum_{s'\in O^{C} \hspace{0.05cm}:\hspace{0.05cm} s'|_{C\cap C'}=u} p_{C'}(s'), 
\end{eqnarray}
for all $C,C'\in \mathcal{C}$ and $u\in O^{C\cap C'}$.
\end{defn}

Another important behaviors set is the \textit{noncontextual} (NC) one.

\begin{defn}\label{defn:II.4}
A behavior $\mathrm{B}=\{p_C\in \prob(O^C):C\in \mathcal{C}\}$ for a scenario $(X,\mathcal{C},O)$ is noncontextual if there exists a probability distribution $p$ over $O^X$ reproducing $\mathrm{B}$ by marginals, that is
\begin{eqnarray}\label{eq:4}
p|_C(s):=\sum_{t\in O^X \hspace{0.05cm}:\hspace{0.05cm} t|_C=s} p(t)=p_C(s),
\end{eqnarray}
for all $C\in \mathcal{C}$ and $s\in O^C$.
\end{defn}

It is easy to verify that NC behaviors are also ND. However, the converse in general does not hold: there are ND behaviors that are not NC. When it happens we say that the behavior is \textit{contextual}.

\begin{defn}\label{defn:II.5}
A ND behavior $\mathrm{B}$ for a scenario $(X,\mathcal{C},O)$ is contextual if there is no probability distribution over $O^X$ such that the condition (\ref{eq:2}) hold for all $C\in \mathcal{C}$ and $s\in O^C$. 
\end{defn}

\begin{rem}
When dealing with Bell scenarios (or their possible generalizations) contextuality is called \textit{nonlocality}.
\end{rem}

The definition \ref{defn:II.5} states contextuality as a phenomenon that arise when local consistence (ND) does not imply global consistence (NC) in a set of probability distributions. Although it seems quite abstract, definition \ref{defn:II.5} is equivalent to the usual Bell-Kochen-Specker notion, which refers to the impossibility of reproducing the statistics of a given behavior by NC hidden-variable models. This equivalence is the content of Fine-Abramsky-Brandenburger theorem \cite{AB11,Fine82}. As an immediate consequence of this result, it can be concluded that a given behavior is NC if, and only if, it can be classically realized, \textit{i.e.} it is compatible with the classical notion of probabilities given by the Kolmogorovian theory with a unique probability space \cite{ATC18}.

Among all possible behaviors realized within non-classical theories, we are particularly interested in the quantum ones.

\begin{defn}
A \textit{quantum model} for a behavior $\mathrm{B}=\{p_C\in\prob(O^C):C\in\mathcal{C}\}$ for a scenario $(X,\mathcal{C},O)$ is defined by: (\textit{i}) a pair $(\rho,\mathcal{H})$, where $\mathcal{H}$ is a (complex and separable) Hilbert space and $\rho$ is a density operator on $\mathcal{H}$; (\textit{ii}) for each $m\in X$, a partition of the identity $\mathbb{I}$ of $\mathcal{H}$ into $|O|$ orthogonal projectors, \textit{i.e.} a collection of orthogonal projectors $\{\mathrm{P}_{m;\hspace{0.05cm}o}:o\in O\}$ such that
\begin{eqnarray*}
\sum_{o\in O}\mathrm{P}_{m;\hspace{0.05cm}o}=\mathbb{I},
\end{eqnarray*}
and $\mathrm{P}_{m;\hspace{0.05cm}o}\mathrm{P}_{m;\hspace{0.05cm}\tilde{o}}=0$ whenever $o\neq \tilde{o}$. We require that $[\mathrm{P}_{m;\hspace{0.05cm}o},\mathrm{P}_{\tilde{m};\hspace{0.05cm}\tilde{o}}]=0$, for all $o,\tilde{o}\in O$, whenever $m$ and $\tilde{m}$ belongs to the same context. The probability of obtain a joint outcome $s\in O^C$ will be
\begin{eqnarray}\label{eq:5}
p_C(s)=\tr\Big(\prod_{m\in C}\mathrm{P}_{m;\hspace{0.05cm}s(m)}\hspace{0.1cm}\rho\Big)
\end{eqnarray}
If the probabilities of a behavior $\mathrm{B}$ can be obtained by a quantum model, we then say that $\mathrm{B}$ is a quantum behaviour.
\end{defn}

It is a well-known fact that quantum behaviors can display contextuality \cite{BCGKL}. A very useful way of witnessing contextuality in a given behavior is through the violation of some NC inequality of the scenario. Such inequalities arise naturally from the geometry of the set of NC behaviors, in particular from the fact that such a set is a convex polytope \cite{ATC18}. In literature there is a large number of demonstrations for contextuality in quantum behaviors (or quantum contextuality for short) based on the violation of some NC inequality \cite{BCGKL}. In this paper we highlight the $n$-cycle inequalities \cite{AQBTCC13}, which are NC inequalities for the $n$-cycle scenarios defined by
\begin{eqnarray}\label{eq:6}
\sum_{i=1}^n \gamma_i \langle M_i M_{i+1}\rangle \leq n-2,
\end{eqnarray}
where $\langle M_i M_{i+1}\rangle=2(p_i(0,0)+p_i(1,1))-1$, $p_i$ denotes the probability distribution associated to the context $\{M_i,M_{i+1}\}$, $\gamma_i\in \{-1,+1\}$, and the number of $\gamma_i$ equal to $-1$ is odd. The inequalities in Eq. (\ref{eq:6}) also defines the facets of the full-dimensional NC polytope of the $n$-cycle scenario, and then their violations is a necessary and sufficient condition for a behavior to be contextual. Furthermore, it is possible to demonstrate that quantum behaviors can violate such inequalities with maximum value given by \cite{AQBTCC13}
\begin{eqnarray}\label{eq:7}
\begin{cases}
\displaystyle{\frac{3n\cos(\pi/n)-n}{1+\cos(\pi/n)}} \hspace{0.5cm} \textrm{if $n$ is odd} \\ 
n\cos(\pi/n) \hspace{1.325cm} \textrm{if $n$ is even}
\end{cases}.
\end{eqnarray}

We have been highlighting the $n$-cycle scenarios throughout our discussion for two many reasons. First, such scenarios are the simplest ones for which contextual behaviors can be defined. In fact, Vorob'ev theorem \cite{Vorobev63,Vorobyev67} states that it is possible to define contextual behaviors in a given scenario if, and only if, its compatibility hypergraph is not acyclic \cite{BMC16}. The second reason is that such scenarios will be on the ground of our main results.

\section{Logical Contextuality}\label{sec:III}

Herein we aim to investigate global inconsistencies that can arise from the possibilistic structure of probabilistic data. In other words, we are interested in the possibilistic version of contextuality, called \textit{logical contextuality} (LC). This type of nonclassicality was precisely formulated in Ref. \cite{AB11}. Since then, many efforts have been done in order to understand it \cite{Mansfield17,SWPBR17,Mansfield13,Constantinthesis,Simmonsthesis,Caruthesis}. In the present section we put in forward the mathematical theory of LC in a presentation that emphasizes its similarities and differences with its probabilistic analogue.

First of all, let us discuss how to consistently describe the possibilistic structure of probabilistic data. In particular we aim to precisely define how properties of (probabilistic) behaviors, such as nondisturbance and (non)contextuality, can be translated to a possibilistic language. 

Given a compatibility scenario $(X,\mathcal{C},O)$, we call $p$ a \textit{possibility distribution} for the context $C\in\mathcal{C}$ if $p$ is a normalized function from the set of joint outcomes $O^C$ to the Booleans $\mathbb{B}=(\{0,1\},\lor,\land,\lnot)$, that is $p:O^C\to \mathbb{B}$ such that
\begin{eqnarray}\label{eq:8}
\bigvee_{s\in O^C}p(s)=1.
\end{eqnarray}
As in the probabilistic case, if $U\subseteq C$ is a subset of measurements of the context $C$, we then define the marginal possibility distribution for $U$ given $p$ as
\begin{eqnarray}\label{eq:9}
p|_U(u):=\bigvee_{s\in O^C \hspace{0.05cm}:\hspace{0.05cm} s|_U=u} p(s),
\end{eqnarray}
for all $u\in O^U$. With this definition, we define the possibilistic structure of a behavior, or its possibilistic collapse \cite{AB11}, as follows.

\begin{defn}\label{defn:III.2}
The \textit{possibilistic collapse} of a behavior $\mathrm{B}=\{p_C\in \prob(O^C):C\in\mathcal{C}\}$ for a scenario $(X,\mathcal{C},O)$ is defined by the set of possibility distributions $\bar{p}_C:O^C\to \mathbb{B}$ defined by the rule
\begin{equation}\label{eq:10}
p_C(s)\mapsto \bar{p}_C(s)=\begin{cases} 1 \hspace{0.5cm} \textrm{if } p_C(s)>0 \\ 0\hspace{0.5cm} \textrm{otherwise} \end{cases}.
\end{equation}
\end{defn}

\begin{rem}
In what follows we always denote by $\bar{p}$ the possibility distribution generated by $p$ by using the rule (\ref{eq:10}).
\end{rem}

The normalization condition for the probability distributions $p_C$ implies that each $\bar{p}_C$ is a well-defined possibility distribution over $O^C$. The possibilistic collapse turns a set of probability distributions into a set of possibility distributions. In other words, it provides the possibilistic structure of the behavior. An important feature of this operation is that it "preserves" the properties discussed in the previous section, namely NC and ND. In fact, this is the content of the following propositions (whose proofs will be omitted because they are trivial).  

\begin{prop}\label{prop:1}
If a behavior $\mathrm{B}=\{p_C\in \prob(O^C):C\in \mathcal{C}\}$ for a scenario $(X,\mathcal{C},O)$ is ND then 
\begin{equation}\label{eq:11}
\bigvee_{s\in O^C \hspace{0.05cm}:\hspace{0.05cm} s|_{C\cap C'}=t} \bar{p}_C(s)=\bigvee_{s'\in O^{C'} \hspace{0.05cm}:\hspace{0.05cm} s'|_{C\cap C'}=t}\bar{p}_{C'}(s'),
\end{equation}
for all $t\in O^{C\cap C'}$ and $C,C'\in\mathcal{C}$, which is the possibilistic version of ND condition (\ref{eq:1}).
\end{prop}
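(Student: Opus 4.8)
The plan is to prove Proposition \ref{prop:1} by a direct unfolding of the definitions, observing that the possibilistic collapse intertwines the Boolean marginal with the real marginal in a natural way. The key elementary fact is the following: for any finite family of non-negative reals $\{a_s\}$, the quantity $\sum_s a_s$ is strictly positive if and only if at least one $a_s$ is strictly positive. Applying the collapse rule \eqref{eq:10} to this observation shows that
\begin{equation*}
\overline{p|_U}(u)=\bigvee_{s\in O^C\,:\,s|_U=u}\bar{p}_C(s)
\end{equation*}
for every behavior, every context $C$, every $U\subseteq C$ and every $u\in O^U$; that is, the collapse of a marginal is the marginal of the collapse. I would state and prove this small lemma first, since it is the only real content of the argument.

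With that identity in hand, the proposition follows immediately. Suppose $\mathrm{B}=\{p_C\}$ is ND. Fix $C,C'\in\mathcal{C}$ and $t\in O^{C\cap C'}$, and set $U:=C\cap C'$. The nondisturbance condition \eqref{eq:3} says precisely that $p_C|_U(t)=p_{C'}|_U(t)$ as non-negative real numbers. Applying the collapse rule to both sides gives $\overline{p_C|_U}(t)=\overline{p_{C'}|_U}(t)$ in $\mathbb{B}$, and then the marginal-of-the-collapse identity rewrites each side as the Boolean disjunction appearing in \eqref{eq:11}:
\begin{equation*}
\bigvee_{s\in O^C\,:\,s|_{U}=t}\bar{p}_C(s)=\overline{p_C|_U}(t)=\overline{p_{C'}|_U}(t)=\bigvee_{s'\in O^{C'}\,:\,s'|_{U}=t}\bar{p}_{C'}(s').
\end{equation*}
Since $C,C'$ and $t$ were arbitrary, this is exactly the asserted possibilistic nondisturbance condition.

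There is essentially no obstacle here, which is why the paper declares the proof trivial and omits it; the only point requiring the slightest care is not to conflate the two distinct operations ``collapse then marginalize'' and ``marginalize then collapse'' before one has checked they coincide, and to note that this coincidence uses nothing about $\mathrm{B}$ beyond non-negativity of probabilities — the ND hypothesis enters only in the very last step, to equate the two real-valued marginals whose collapses are then compared.
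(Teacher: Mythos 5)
Your proof is correct and is exactly the direct definition-unfolding argument the paper has in mind when it declares the proofs of Propositions \ref{prop:1} and \ref{prop:2} trivial and omits them: the observation that a finite sum of non-negative reals is positive iff some term is positive shows that ``collapse then marginalize'' equals ``marginalize then collapse,'' after which applying the collapse to both sides of the nondisturbance condition gives Eq.~(\ref{eq:11}). No gaps.
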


\begin{prop}\label{prop:2}
If a behavior $\mathrm{B}=\{p_C\in \prob(O^C):C\in \mathcal{C}\}$ for the scenario $(X,\mathcal{C},O)$ is noncontextual, then there exists a possibility distribution $\bar{p}$ over $O^X$ such that
\begin{eqnarray}\label{eq:12}
\bar{p}|_C(s):=\bigvee_{t\in O^X \hspace{0.05cm}:\hspace{0.05cm} t|_C=s} \bar{p}(t)=\bar{p}_C(s),
\end{eqnarray}
for all $C\in \mathcal{C}$ and $s\in O^C$.
\end{prop}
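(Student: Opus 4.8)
The plan is to establish the implication by directly "collapsing" the witnessing global probability distribution. Suppose $\mathrm{B}=\{p_C\in\prob(O^C):C\in\mathcal{C}\}$ is noncontextual. By Definition \ref{defn:II.4} there exists a probability distribution $p$ over $O^X$ such that $p|_C(s)=p_C(s)$ for all $C\in\mathcal{C}$ and $s\in O^C$. The natural candidate for the global possibility distribution is the possibilistic collapse of $p$, namely $\bar{p}:O^X\to\mathbb{B}$ defined by the rule \eqref{eq:10}. First I would check that $\bar{p}$ is a well-defined possibility distribution over $O^X$: since $\sum_{t\in O^X}p(t)=1>0$, at least one $t$ has $p(t)>0$, so $\bigvee_{t\in O^X}\bar{p}(t)=1$, which is the normalization condition \eqref{eq:8} for $O^X$.

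The remaining step is to verify the marginalization identity \eqref{eq:12}, i.e. $\bar{p}|_C(s)=\bar{p}_C(s)$ for every $C\in\mathcal{C}$ and every $s\in O^C$. This is a routine unwinding of the Boolean operations against the positivity of sums of non-negative reals. On the one hand, by definition \eqref{eq:12} of the possibilistic marginal applied to $\bar{p}$,
\begin{equation*}
\bar{p}|_C(s)=\bigvee_{t\in O^X\,:\,t|_C=s}\bar{p}(t),
\end{equation*}
and this Boolean disjunction equals $1$ precisely when there exists some $t\in O^X$ with $t|_C=s$ and $p(t)>0$. On the other hand, $\bar{p}_C(s)=1$ precisely when $p_C(s)>0$, and since $p$ witnesses noncontextuality we have $p_C(s)=p|_C(s)=\sum_{t\in O^X\,:\,t|_C=s}p(t)$. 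Because all terms $p(t)$ are non-negative, this sum is strictly positive if and only if at least one summand is strictly positive, i.e. if and only if there exists $t\in O^X$ with $t|_C=s$ and $p(t)>0$. Hence $\bar{p}|_C(s)=1\iff\bar{p}_C(s)=1$, and since both take values in $\{0,1\}$ they are equal. This establishes \eqref{eq:12}.

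There is essentially no hard part here — the only thing to be careful about is that the equivalence "a finite sum of non-negative reals is positive $\iff$ some summand is positive" is exactly what makes the Boolean disjunction commute with the collapse of the real-valued marginal; this is why the proposition is flagged as trivial in the text. One could state the argument more abstractly by noting that the collapse map $\mathbb{R}^+\to\mathbb{B}$ sending $x\mapsto(x>0)$ is a semiring homomorphism onto the Booleans and that marginalization in both settings is just the "sum" (i.e. semiring addition) over fibres of the restriction map $t\mapsto t|_C$, so the collapse intertwines the two marginalizations; but for the purposes of this paper the elementary verification above suffices.
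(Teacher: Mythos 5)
Your proof is correct and is precisely the argument the authors have in mind when they declare Propositions \ref{prop:1} and \ref{prop:2} ``trivial'' and omit their proofs: collapse the global distribution $p$ witnessing noncontextuality via the rule \eqref{eq:10}, and use the fact that a finite sum of non-negative reals is positive iff some summand is, so that the collapse intertwines probabilistic and possibilistic marginalization. Nothing is missing.
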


Once we have a well-defined notion of ND and global possibility distributions, the definition of (non)contextuality in possibilistic sense follows immediately.

\begin{defn}\label{definition:III.3}
A behavior $\mathrm{B}=\{p_C\in \prob(O^C):C\in \mathcal{C}\}$ for a scenario $(X,\mathcal{C},O)$ is \textit{logically noncontextual} if there exists a possibility distribution $\bar{p}$ over $O^X$ reproducing the possibilistic collapse of $\mathrm{B}$ by marginals, that is
\begin{eqnarray}\label{eq:13}
\bar{p}|_C(s):=\bigvee_{t\in O^X \hspace{0.05cm}:\hspace{0.05cm} t|_C=s} \bar{p}(t)=\bar{p}_C(s),
\end{eqnarray}
for all $C\in \mathcal{C}$ and $s\in O^C$. Otherwise, $\mathrm{B}$ is said to be \textit{logically contextual} (LC).
\end{defn}

Proposition \ref{prop:2} states that if a given behavior is NC in probabilistic sense then it is NC in possibilistic sense. However, it should be stressed that the converse in general does not hold.

An alternative and useful characterization of LC is given by the following theorem. 

\begin{thm}\label{thm:1}
A behavior $\mathrm{B}=\{p_C\in \prob(O^C):C\in\mathcal{C}\}$ for the scenario $(X,\mathcal{C},O)$ is LC if, and only if, there exists $s\in O^C$, for some $C\in\mathcal{C}$, such that
\begin{enumerate}
\item $p_C(s)>0$; 
\item if $t\in \{g\in O^X: g|_C=s\}$ then $p_{C'}(t|_{C'})=0$ for some $C'\in \mathcal{C}\setminus\{C\}$.
\end{enumerate}
\end{thm}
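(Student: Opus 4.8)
The plan is to prove the two implications separately, working directly from Definition \ref{definition:III.3} and the formalism of possibilistic collapse.

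\textbf{Sufficiency.} Suppose there exist $C\in\mathcal{C}$ and $s\in O^C$ satisfying conditions (1) and (2). Assume, for contradiction, that $\mathrm{B}$ is logically noncontextual, so there is a global possibility distribution $\bar{p}$ over $O^X$ reproducing the collapse $\bar{p}_C$ by marginals as in Eq. (\ref{eq:13}). Condition (1) says $\bar{p}_C(s)=1$; hence $\bar{p}|_C(s)=\bigvee_{t:\,t|_C=s}\bar{p}(t)=1$, so there is at least one $t\in O^X$ with $t|_C=s$ and $\bar{p}(t)=1$. Fix such a $t$. By condition (2), there is $C'\in\mathcal{C}\setminus\{C\}$ with $p_{C'}(t|_{C'})=0$, i.e. $\bar{p}_{C'}(t|_{C'})=0$. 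But since $t|_C=s$ and in particular $t$ restricts to $t|_{C'}$ on $C'$, we have $\bar{p}(t)\le\bar{p}|_{C'}(t|_{C'})=\bar{p}_{C'}(t|_{C'})=0$, contradicting $\bar{p}(t)=1$. Therefore $\mathrm{B}$ is LC.

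\textbf{Necessity.} This is the direction I expect to be the main obstacle, since one must \emph{construct} a witnessing pair $(C,s)$ from the mere failure of global consistency. The natural strategy is contrapositive: assume that for every $C\in\mathcal{C}$ and every $s\in O^C$ with $p_C(s)>0$, \emph{every} $t\in O^X$ with $t|_C=s$ has $p_{C'}(t|_{C'})>0$ for all $C'\in\mathcal{C}$; then build a global $\bar{p}$ certifying logical noncontextuality. The candidate is the obvious one: declare $\bar{p}(t)=1$ iff $t|_{C'}\in\mathrm{supp}(\bar{p}_{C'})$ for all $C'\in\mathcal{C}$, and $\bar{p}(t)=0$ otherwise. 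One must check (i) normalization, i.e. that $\bar{p}$ is not identically $0$ — this follows because some $p_C(s)>0$ exists and the negated hypothesis guarantees an extension $t$ of $s$ that is supported in every context; (ii) that $\bar{p}|_C=\bar{p}_C$ for each $C\in\mathcal{C}$. For (ii), the inclusion $\bar{p}|_C(s)\le\bar{p}_C(s)$ is immediate from the definition of $\bar{p}$ (any $t$ with $\bar{p}(t)=1$ has $t|_C$ supported in $C$). The reverse inclusion $\bar{p}_C(s)=1\Rightarrow\bar{p}|_C(s)=1$ is exactly where the negated hypothesis does the work: given $s$ with $p_C(s)>0$, it furnishes at least one global $t$ with $t|_C=s$ such that $t|_{C'}\in\mathrm{supp}(\bar{p}_{C'})$ for all $C'$, hence $\bar{p}(t)=1$ and $\bar{p}|_C(s)=1$.

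\textbf{Remark on the subtlety.} The delicate point in necessity is that the negated hypothesis as I phrased it must be strong enough to produce, simultaneously for all contexts, a single consistent global extension of the chosen local outcome — a priori one only gets extensions context-by-context. I would make sure the quantification in the negation of Theorem \ref{thm:1} is read correctly: its negation states that \emph{for all} $C$ and \emph{for all} $s\in O^C$ with $p_C(s)>0$, \emph{there is no bad} $t$, i.e. \emph{every} extension $t$ of $s$ is globally supported; this is precisely what licenses defining $\bar{p}$ as the indicator of ``globally supported'' configurations and then verifying it marginalizes correctly. If, after setting up notation, this reading still leaves a gap, the fallback is to invoke the Fine--Abramsky--Brandenburger picture: a logically noncontextual behaviour is one whose possibilistic collapse admits a global section of the relevant presheaf, and the non-existence of a witnessing $(C,s)$ is exactly the statement that the evident Boolean distribution built from the supports is such a global section. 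I would keep the argument self-contained by the explicit construction above and relegate the sheaf-theoretic rephrasing to a remark.
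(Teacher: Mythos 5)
Your overall strategy coincides with the paper's: sufficiency by contradiction against a hypothetical global possibility distribution, and necessity by contraposition, taking $\bar{p}$ to be the indicator of the set of globally supported assignments (what the paper calls $\mathrm{Supp}(\mathrm{B})$) and checking both inclusions of $\bar{p}|_C=\bar{p}_C$. The sufficiency half is correct as written.

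There is, however, a logical slip in how you negate the theorem's conditions, and as stated it makes your contrapositive hypothesis too strong. Condition (2) is a universally quantified conditional over $t\in O^X$: \emph{every} extension $t$ of $s$ is annihilated by some context $C'$. Its negation is therefore existential: \emph{some} extension $t$ of $s$ satisfies $p_{C'}(t|_{C'})>0$ for all $C'$. You instead assert (both at the start of the necessity argument and again in your closing remark) that the negation says \emph{every} extension of $s$ is globally supported. Under that stronger hypothesis, the contrapositive you actually establish is ``LC implies there exist $C$ and $s$ with $p_C(s)>0$ and \emph{some} extension of $s$ that fails to be globally supported,'' which is strictly weaker than the theorem's conclusion that \emph{all} extensions of $s$ are killed. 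Fortunately, the only place the hypothesis enters your verification of $\bar{p}_C(s)=1\Rightarrow\bar{p}|_C(s)=1$ is where you say it ``furnishes at least one global $t$ with $t|_C=s$'' lying in every support; that is precisely the correct existential negation, so the argument is repaired by fixing the quantifier and nothing else changes. Relatedly, the worry in your remark --- that one might only obtain extensions context-by-context and need a gluing or sheaf-theoretic argument --- does not arise: condition (2) already quantifies over full global assignments $t\in O^X$, so its (correctly formed) negation hands you a single $t$ that is supported in all contexts simultaneously. This corrected version is exactly how the paper proceeds.
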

\begin{proof} 
Let $\mathrm{B}=\{p_C\in\prob(O^C):C\in\mathcal{C}\}$ be a behavior for the scenario $(X,\mathcal{C},O)$. First, suppose the existence of $s\in O^C$, for some $C\in \mathcal{C}$, such that the conditions 1 and 2 of the theorem hold. If $\bar{p}$ is a possibility distribution over $O^X$ reproducing $\mathrm{B}$ by marginals, then condition 2 implies that $\bar{p}(t)=0$ for all $t\in\{g\in O^X:g|_C=s\}$. However, this implies that $\bar{p}_C(s)=0$, which contradicts the condition 1. Therefore, $\mathrm{B}$ is LC.

In order to prove the converse, consider the contrapositive of the statement, \textit{i.e.} let us prove that the non-existence of such $s$ implies logical noncontextuality. Assuming that there is no $s\in O^C$ such that the conditions of the theorem hold, then $\mathrm{Supp}(\mathrm{B})
:=\{t\in O^X: p_C(t|_C)>0, \hspace{0.1cm} \forall \hspace{0.1cm} C\in\mathcal{C}\}$ is non-empty. Let $\bar{p}$ be the possibility distribution over $O^X$ such that $\bar{p}(t)=1$ if $t\in \mathrm{Supp}(\mathrm{B})$, and $\bar{p}(t)=0$ otherwise. From our assumption, for all $s\in O^C$ such that $p_C(s)>0$, there exists $t\in \mathrm{Supp}(\mathrm{B})$ such that $t|_C=s$. Thus,
\begin{equation}\label{eq:14}
\bar{p}|_C(s):=\bigvee_{t\in O^X \hspace{0.05cm}:\hspace{0.05cm} t|_C=s} \bar{p}(t)=\bigvee_{t\in \mathrm{Supp}(\mathrm{B}) \hspace{0.05cm}:\hspace{0.05cm} t|_C=s}\bar{p}(t)=1,
\end{equation}
hence $\bar{p}|_C(s)=\bar{p}_C(s)$ if $p_C(s)>0$. To the case in which $p_C(s)=0$, there is no $t\in \mathrm{Supp}(\mathrm{B})$ such that $t|_C=s$. Thus,
\begin{equation}\label{eq:15}
\bar{p}|_C(s):=\bigvee_{t\in O^X \hspace{0.05cm}:\hspace{0.05cm} t|_C=s} \bar{p}(t)=\bigvee_{t\in O^X\setminus\mathrm{Supp}(\mathrm{B})\hspace{0.05cm}:\hspace{0.05cm} t|_C=s}\bar{p}(t)=0,
\end{equation}
hence $\bar{p}|_C(s)=\bar{p}_C(s)$. Therefore, we conclude that $\bar{p}$ is a possibility distribution over $O^X$ such that $\bar{p}|_C(s)=\bar{p}_C(s)$ for all $C\in \mathcal{C}$ and $s\in O^C$, \textit{i.e.} $\mathrm{B}$ is logically noncontextual.
\end{proof}

Theorem \ref{thm:1} states that logical contextuality prohibits certain pre-measurement assignments to the values of all measurable properties at once. In fact, from condition 2 of theorem \ref{thm:1}, any noncontextual assignment to all measurement results would imply that $p_C(s)=0$. Hence, it is not possible to give a noncontextual assignment to all measurements results in such a way that the joint outcome $s$ is assigned to the measurements in context $C$, even though it has a non-zero probability of being observed. With this picture in mind, the following definitions will be useful.

\begin{defn}\label{definition:III.4}
Let $\mathrm{B}=\{p_C\in \prob(O^C):C\in \mathcal{C}\}$ be a behavior for a scenario $(X,\mathcal{C},O)$. The \textit{support of }$\mathrm{B}$ is a subset of $O^X$ defined by $\mathrm{Supp}(\mathrm{B}):=\{t\in O^X: p_C(t|_C)>0,\hspace{0.1cm} \forall\hspace{0.1cm} C\in \mathcal{C}\}$. The elements in $\mathrm{Supp}(\mathrm{B})$ will be called \textit{global assignments}. 
\end{defn}

By using the terms of the definition above, logical contextuality means the existence of some joint outcome with larger-than-zero probability that cannot be obtained by the restriction of any global assignment. The extreme case occurs when no global assignment is allowed, \textit{i.e.} when for all context $C\in \mathcal{C}$ there exists $s\in O^C$ such that the conditions 1 and 2 of the theorem \ref{thm:1} hold. A behavior with this property is said to be \textit{strongly contextual} (SC).

\begin{table}[!htb]
\begin{ruledtabular}
\begin{tabular}{ccccc}
&$(0,0)$&$(1,0)$&$(0,1)$&$(1,1)$\\
\hline
$\{A_1,B_1\}$ & $1$ & $0$ & $0$ & $1$ \\
$\{B_1,A_2\}$ & $1$ & $1$ & $1$ & $1$\\
$\{A_2,B_2\}$ & $1$ & $1$ & $1$ & $1$ \\
$\{B_2,A_1\}$ & $1$ & $1$ & $1$ & $1$ \\
\end{tabular}
\end{ruledtabular}
\caption{Possibilistic collapse of Bell's model [Example \ref{example:III.1}].}
\label{table:1}
\end{table}
\begin{table}[!htb]
\begin{ruledtabular}
\begin{tabular}{ccccc}
&$(0,0)$&$(1,0)$&$(0,1)$&$(1,1)$\\
\hline
$\{A_1,B_1\}$ & $1$ & $1$ & $1$ & $1$ \\
$\{B_1,A_2\}$ & $1$ & $0$ & $1$ & $1$\\
$\{A_2,B_2\}$ & $1$ & $1$ & $1$ & $0$ \\
$\{B_2,A_1\}$ & $1$ & $1$ & $0$ & $1$ \\
\end{tabular}
\end{ruledtabular}
\caption{Possibilistic collapse of Hardy's model [Example \ref{example:III.2}].}
\label{table:2}
\end{table}
\begin{table}[!htb]
\begin{ruledtabular}
\begin{tabular}{ccccc}
&$(0,0)$&$(1,0)$&$(0,1)$&$(1,1)$\\
\hline
$\{A_1,B_1\}$ & $1$ & $0$ & $0$ & $1$ \\
$\{B_1,A_2\}$ & $1$ & $0$ & $0$ & $1$\\
$\{A_2,B_2\}$ & $1$ & $0$ & $0$ & $1$ \\
$\{B_2,A_1\}$ & $0$ & $1$ & $1$ & $0$ \\
\end{tabular}
\end{ruledtabular}
\caption{Possibilistic collapse of PR-Box [Example \ref{example:III.3}].}
\label{table:3}
\end{table}

\begin{defn}\label{defn:III.5}
A behavior $\mathrm{B}=\{p_C\in \prob(O^C):C\in \mathcal{C}\}$ for a scenario $(X,\mathcal{C},O)$ is said to be \textit{strongly contextual} if $\mathrm{Supp}(\mathrm{B})=\emptyset$.
\end{defn}

In a sense, SC is the most extreme form of non-classicality exhibited by a behavior. In fact, as demonstrated by Abramsky and Brandenburger \cite{AB11}, all ND behavior can be written as a convex sum of a NC and a SC behaviors. This decomposition give rise to the so-called \textit{contextual fraction}, which is a measure of how contextual a behavior is \cite{AB11,ABM17}. It is possible to demonstrate that the contextual fraction is equal to one only for behaviors that exhibit SC \cite{AB11,ABM17}. Furthermore, such a contextuality measure plays a very important role in resource theories for contextuality, due to the fact that it is a monotone under all linear operations that preserve the NC set \cite{Amaral19}.

In order to illustrate these concepts and results, consider the "canonical" examples \cite{AB11,Mansfield13}: Bell's model \cite{Bell64}, Hardy's model \cite{Hardy93}, and the Popescu-Rohrlich box (PR-Box) \cite{PR94}. These are behaviors for the CHSH (4-cycle) scenario (or equivalently for the Bell-type scenario $(2,2,2)$). Each of these behaviors exhibit different "levels" of contextuality, namely: Bell's model is contextual but not LC; Hardy's model is LC but not SC; and PR-Box is SC. That is, we have the following qualitative hierarchy:
\begin{eqnarray*}
\textrm{Bell}<\textrm{Hardy}<\textrm{PR-Box}.
\end{eqnarray*}

\begin{example}[Bell's Model]\label{example:III.1} Proposed by Bell  \cite{Bell64} to demonstrate that quantum mechanics predicts nonlocal correlations, Bell's model has the possibilistic structure illustrated in Table \ref{table:1}.
\end{example}

\begin{example}[Hardy's Model]\label{example:III.2} Proposed by Hardy \cite{Hardy93} in its inequality-free proof of Bell's theorem, the possibilistic structure of Hardy's model is illustrated in Table \ref{table:2}.
\end{example}

\begin{example}[PR-Box]\label{example:III.3} Proposed by Popescu and Rohrlich \cite{PR94} to illustrate the existence of stronger than quantum non-signalizing correlations, the possibilistic collapse of PR-Box is illustrated in Table \ref{table:3}.
\end{example}

The possibilistic structure of a behavior can be visualized in its \textit{bundle diagram} \cite{BO18}. These consists of a base space formed by the compatibility hypergraph of the scenario (see Fig. \ref{fig:3} (a)) and, on top of each vertex, we added a fiber with the possible outcomes for the corresponding measurement (see Fig. \ref{fig:3} (b)). We connect two outcomes by an edge if the corresponding joint outcome is possible (\textit{i.e.} has non-zero probability). In this diagrammatic representation, global assignments are associated to loops, \textit{i.e.} closed paths traversing all the fibers exactly once; LC means the existence of some edge that does not belong to any loop; and SC means that no loop is allowed (see Fig. \ref{fig:4}). In Bell's model (Fig. \ref{fig:4} (a)) all edges belongs to at least one loop; in Hardy's model (Fig. \ref{fig:4} (b)) one can see that the red edge does not belong to any loop; and in the PR-Box (Fig. \ref{fig:4} (c)) there is no loop.

\begin{figure}
\centering
\includegraphics[scale=0.35,angle=0]{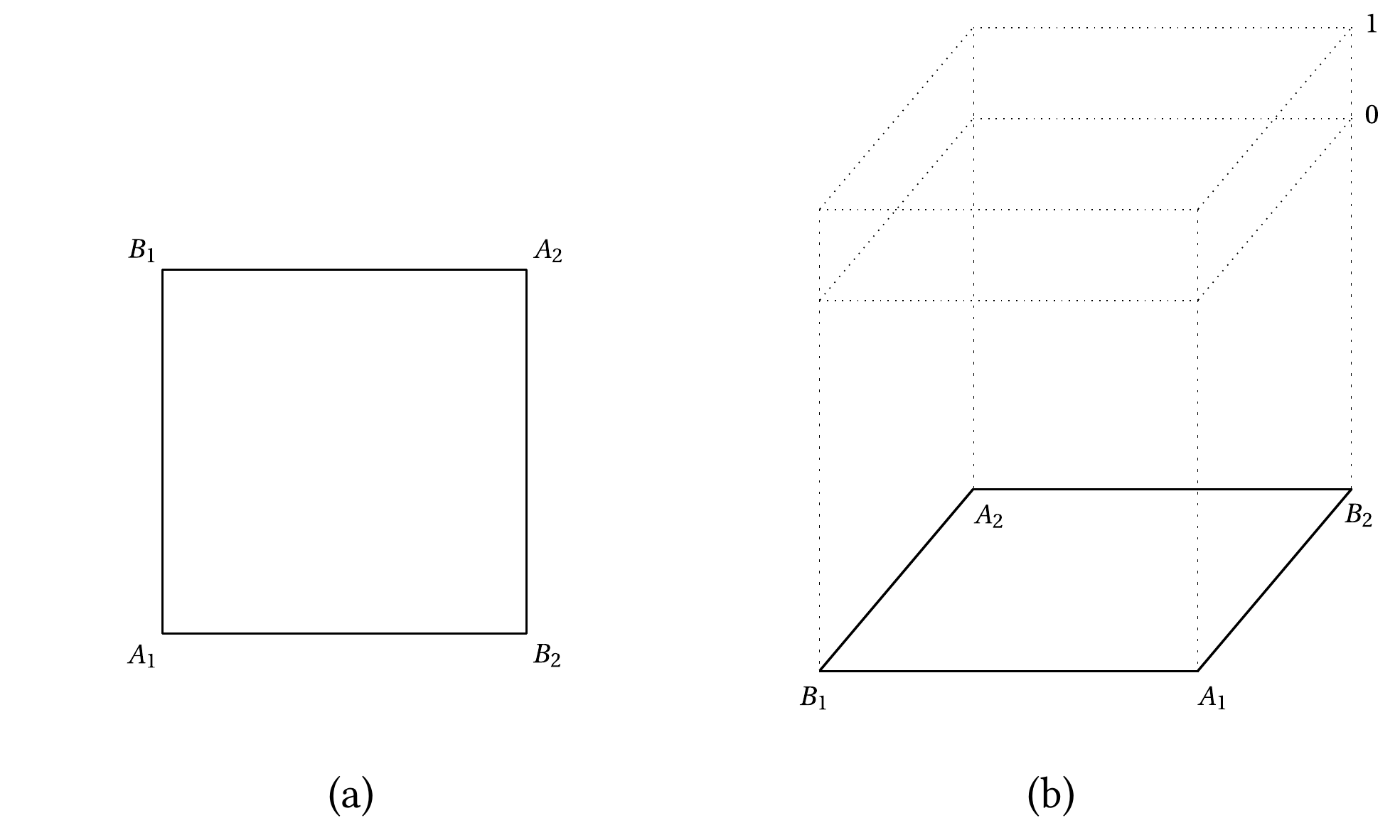}
\caption{Bundle diagram for CHSH scenario.}
\label{fig:3}
\end{figure}
\begin{figure}
\centering
\includegraphics[scale=0.35,angle=0]{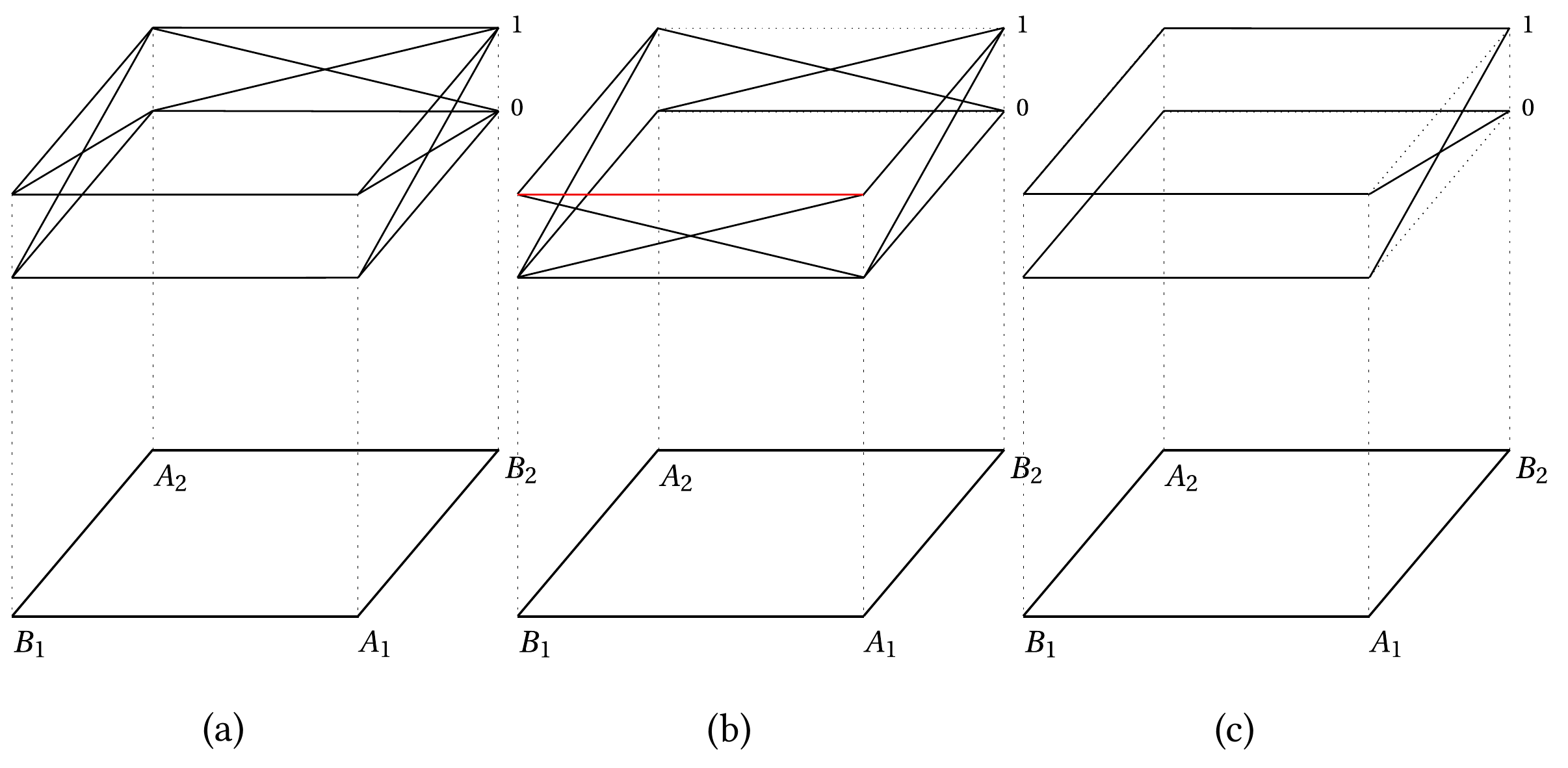}
\caption{Bundle diagram for the Bell's model (a); Hardy's model (b); and PR-Box (c).}
\label{fig:4}
\end{figure}

\section{Possibilistic Paradoxes and Logical Contextuality}\label{sec:IV}

As already mentioned, the violation of some NCI is a simple criterion to decide whether a given behavior is or not contextual. Despite being the most usual and also quite general, this is not the only possible strategy to demonstrate the occurrence of non-classicalities. In particular, in some specific cases it is possible to demonstrate quantum contextuality without the use of any inequality, that is, an inequality-free proof.

Formally, an inequality-free proof is based on the violation of logical constrains imposed by the assumption of NC by the possibilistic structure of the behavior. Similar to what occurs in proofs based on the violation of NCIs, in general only a subset of the set of possibilities is sufficient to obtain contradictions with NC instead of the whole behavior. Such subsets of possibilities define what we call \textit{possibilistic paradoxes}.

\begin{defn}\label{defn:IV.1}
A \textit{possibilistic paradox} (PP) is a set of possibilistic conditions whose occurrence implies contextuality.
\end{defn}

In a sense, the concept of PP is the possibilistic analogue of NCI. This is because the occurrence of a PP implies not only contextuality but also LC, since it provides a purely possibilistic contradiction with NC. In addition, any inequality-free proof is associated to at least one PP.

\begin{example}[Hardy PP]\label{example:IV.1} Hardy's inequality-free proof \cite{Hardy92,Hardy93} is based on the occurrence of the following PP:
\begin{eqnarray}\label{eq:16}
\begin{cases}
p_{11}(1,1)>0 \\
p_{12}(1,0)=0 \\
p_{21}(0,1)=0 \\
p_{22}(1,1)=0
\end{cases},
\end{eqnarray}
where $p_{\mu\nu}$ denotes the joint probability for the context $\{A_\mu,B_\nu\}$ in the Bell scenario $(2,2,2)$, with Alice's and Bob's measurement $\{A_1,A_2\}$ and $\{B_1,B_2\}$, respectively. To demonstrate that these conditions actually define a PP, just notice that the second, forth, and third conditions read respectively as: $A_1=1\implies B_2=1$, $B_2=1\implies A_2=0$, and $A_2=0\implies B_1=0$. Hence, for any global assignment compatible with these constrains we have $A_1=1\implies B_1=0$, meaning that $p_{11}(1,1)=0$, which contradicts the first condition. Therefore, the occurrence of (\ref{eq:16}) implies LC.
\end{example}

\begin{example}[Cabello \textit{et. al} PP] Cabello \textit{et al.} inequality-free proof \cite{CBTCB13} is based on the occurrence of the following PP:
\begin{eqnarray}\label{eq:17}
\begin{cases}
p_{1}(0,1)>0 \\
p_{2}(1,1)=0 \\
p_{3}(0,0)=0 \\
p_{4}(1,1)=0 \\
p_{5}(0,0)=0
\end{cases},
\end{eqnarray}
where $p_{\mu}$ denotes the joint probability for the context $\{M_\mu,M_{\mu+1}\}$ in the KCBS scenario. To demonstrate that (\ref{eq:17}) is actually a PP, just notice that the second, third, forth, and fifth conditions read respectively as: $M_2=1\implies M_3=0$, $M_3=0\implies M_4=1$, $M_4=1\implies M_5=0$, and $M_5=0 \implies M_1=1$. Hence, for any global assignment compatible with these constrains we have $M_2=1\implies M_1=1$, meaning that $p_1(0,1)=0$, which contradicts the first condition. Therefore, the occurrence of (\ref{eq:17}) implies LC.
\end{example}

\subsection{Necessary and Sufficient Conditions for LC}

Herein we aim at understanding, as general as possible, the connection between PP and LC. The building blocks of our discussion will be the $n$-cycle scenarios. As already mentioned, an $n$-cycle scenario consists of $n$ dichotomic measurements, $\{M_1,\dots,M_n\}$, $O=\{0,1\}$, whose contexts are $C_i:=\{M_i,M_{i+1}\}$, where the sums in indexes are taken modulo $n$. The following theorem states necessary and sufficient conditions for LC in such scenarios.

\begin{rem}
In what follows, whenever we refer to the $n$-cycle scenarios, the sums in the indexes will be taken modulo $n$, where $n+1=1$. A joint probability distribution for the context $C_i$ will be denoted by $p_i$.
\end{rem}

\begin{thm}\label{thm:2}
A behavior for the $n$-cycle scenario is LC if, and only if, there exist $(a,b)\in O^2$ and $(\alpha_1,\dots,\alpha_{n-2})\in O^{n-2}$ such that
\begin{eqnarray}\label{eq:18}
\begin{cases}
p_i(a,b)>0 \\
p_{i+1}(b,\alpha_1)=0 \\
p_{i+2}(\lnot \alpha_1,\alpha_2)=0 \\
\vdots \\
p_{i+n-1}(\lnot \alpha_{n-2},a)=0
\end{cases},
\end{eqnarray}
for some $1\leq i\leq n$.
\end{thm}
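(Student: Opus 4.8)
The statement is an ``if and only if'', and the cleanest route is to reduce it to Theorem~\ref{thm:1}, which already characterizes LC in any scenario as: there exists a locally possible joint outcome $s\in O^{C}$ that cannot be extended to any global assignment consistent with all contexts. So the whole task is to show that, \emph{in the $n$-cycle scenario}, this abstract obstruction to extension is always witnessed by a chain of implications of exactly the form displayed in \eqref{eq:18}. The underlying intuition is the one already used in the Hardy and Cabello \textit{et al.}\ examples: each equation $p_{j}(\lnot\alpha_{k-1},\alpha_{k})=0$ is the implication ``$M_{j}=\alpha_{k-1}\Rightarrow M_{j+1}=\alpha_{k}$'' (here I am writing the negation as in the theorem so that the antecedent reads off the value forced at $M_{j}$ by the previous link), and stringing $n-1$ such implications around the cycle forces a value at $M_{i}$ which contradicts $p_{i}(a,b)>0$.

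\textbf{The easy direction ($\Leftarrow$).} Assume \eqref{eq:18} holds for some $i$, some $(a,b)$, some $(\alpha_{1},\dots,\alpha_{n-2})$. Suppose for contradiction that $\mathrm{B}$ is logically noncontextual, so by Definition~\ref{definition:III.3} there is a global possibility distribution $\bar p$ over $O^{X}$ reproducing the possibilistic collapse by marginals; equivalently, by Theorem~\ref{thm:1}, the outcome $s=(a,b)$ on $C_{i}$ extends to some $t\in\mathrm{Supp}(\mathrm{B})$, i.e.\ $t|_{C_{i}}=(a,b)$ and $p_{j}(t|_{C_{j}})>0$ for every $j$. Now walk around the cycle: $t(M_{i})=a$ and $t(M_{i+1})=b$; the equation $p_{i+1}(b,\alpha_{1})=0$ together with $p_{i+1}(t|_{C_{i+1}})>0$ forces $t(M_{i+2})\neq\alpha_{1}$, hence (dichotomic outcomes) $t(M_{i+2})=\lnot\alpha_{1}$; then $p_{i+2}(\lnot\alpha_{1},\alpha_{2})=0$ forces $t(M_{i+3})=\lnot\alpha_{2}$; inductively $t(M_{i+k+1})=\lnot\alpha_{k}$ for $k=1,\dots,n-2$, so $t(M_{i+n-1})=\lnot\alpha_{n-2}$ and $t(M_{i+n})=t(M_{i})=a$. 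But the last equation $p_{i+n-1}(\lnot\alpha_{n-2},a)=0$ contradicts $p_{i+n-1}(t|_{C_{i+n-1}})>0$. Hence no such $t$ exists, contradicting Theorem~\ref{thm:1} in the noncontextual direction, so $\mathrm{B}$ is LC. (One should double-check the mod-$n$ bookkeeping: the chain uses contexts $C_{i+1},\dots,C_{i+n-1}$, which are exactly the $n-1$ contexts other than $C_{i}$, and the $n-2$ free bits $\alpha_{1},\dots,\alpha_{n-2}$ correspond to the $n-2$ internal vertices $M_{i+2},\dots,M_{i+n-1}$ of the path from $M_{i+1}$ back around to $M_{i}$.)

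\textbf{The hard direction ($\Rightarrow$), and the main obstacle.} Assume $\mathrm{B}$ is LC. By Theorem~\ref{thm:1} there is a context $C_{i}$ and a joint outcome $(a,b)\in O^{2}$ with $p_{i}(a,b)>0$ such that \emph{no} $t\in O^{X}$ with $t|_{C_{i}}=(a,b)$ lies in $\mathrm{Supp}(\mathrm{B})$. I want to convert this non-extendability into an explicit chain \eqref{eq:18}. Think of trying to build $t$ greedily along the path $M_{i+1}\to M_{i+2}\to\cdots\to M_{i+n-1}\to M_{i}$: start from the forced values $t(M_{i})=a$, $t(M_{i+1})=b$, and at each step $j$ ask which values of $M_{j+1}$ are \emph{locally} allowed by $p_{j}$ given the value already chosen at $M_{j}$. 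The key structural fact about the $n$-cycle is that the compatibility graph is a single cycle, so a global assignment consistent with all contexts is exactly an assignment of bits to the cycle vertices such that every edge-restriction is possible; non-extendability of $(a,b)$ on edge $C_{i}$ therefore means that \emph{every} way of propagating a bit-string around the remaining path from $b$ back to $a$ hits a forbidden edge somewhere. The technical heart is a pigeonhole/worst-case argument: among the (at most $2^{n-2}$) candidate propagations, pick one that is forced as far as possible, or argue directly that because $\{0,1\}$ has only two elements, ``not extendable'' is equivalent to the existence of a single deterministic chain of implications $b\mapsto\alpha_{1}\mapsto\lnot\alpha_{1}\mapsto\alpha_{2}\mapsto\cdots$ whose links are each of the form $p_{j}(\cdot,\cdot)=0$. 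Concretely, I expect to prove the contrapositive: if for the given $(a,b)$ \emph{no} tuple $(\alpha_{1},\dots,\alpha_{n-2})$ makes all $n-1$ zero-equations in \eqref{eq:18} hold simultaneously, then one can inductively choose $t(M_{i+2}),t(M_{i+3}),\dots$ one vertex at a time — at vertex $M_{i+k+1}$ the failure of the relevant equation to be forced leaves at least one locally allowed value, pick it — and arrive at a full $t\in\mathrm{Supp}(\mathrm{B})$ with $t|_{C_{i}}=(a,b)$, contradicting LC via Theorem~\ref{thm:1}. Making ``failure to force'' precise and checking that the greedy choice never paints itself into a corner at the final edge $C_{i+n-1}$ (which must close up to the fixed value $a$) is the delicate point: the closing edge is the only place a locally fine choice can fail globally, and one must use exactly the non-existence of the bad tuple to guarantee a good closing value. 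I would organize this as: (i) fix $C_{i}$ and $(a,b)$ witnessing LC; (ii) define, for each partial propagation, the set of locally allowed next values; (iii) show that if \eqref{eq:18} never holds then this process yields a global assignment — contradiction; (iv) conclude the required tuple exists. I do not expect any genuinely hard computation here, only careful index bookkeeping modulo $n$ and a clean induction; the one subtlety to flag is the boundary case $n=3$ (where there are no $\alpha$'s and the chain is just $p_{i}(a,b)>0$, $p_{i+1}(b,a)=0$, $p_{i+2}(?,a)=0$ with the middle bit collapsing), which should be checked separately to make sure the statement degenerates correctly.
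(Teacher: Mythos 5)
Your left-to-right direction is correct and is essentially the paper's own argument: read each zero entry as an implication, chain the implications around the cycle, and contradict $p_i(a,b)>0$ via Theorem \ref{thm:1}. The converse, however, is where the mathematical content of the theorem lies, and your proposal stops exactly there. You propose a greedy forward construction --- propagate a value from $M_{i+1}$ around the path, at each vertex picking some locally allowed successor --- and you correctly flag that the danger is failing to close the cycle at the edge $C_{i+n-1}$. But as described the greedy pass genuinely fails: at a vertex where \emph{both} successor values are locally allowed, you may take the branch whose continuation is incompatible with $a$ at the closing edge even though the other branch would have closed up. ``Pick any locally allowed value'' is therefore not enough, and nothing in your sketch explains how the non-existence of a tuple $(\alpha_1,\dots,\alpha_{n-2})$ steers the choice. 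Announcing that one ``must use exactly the non-existence of the bad tuple'' is naming the gap, not filling it.

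The missing lemma is a dichotomy on the forward-reachable sets $R_k\subseteq O$ of values of $M_{i+k}$ attainable from $b$ along a path of possible edges. If some $R_k$ contains both outcomes, then by nondisturbance $R_j$ equals the full set of marginally possible values of $M_{i+j}$ for every $j\geq k$; in particular $R_{n-1}$ contains the value $v$ with $\bar{p}_{i+n-1}(v,a)=1$ whose existence ND guarantees (because $a$ is marginally possible for $M_i$), so a global assignment extending $(a,b)$ exists. Otherwise every $R_k$ is a singleton $\{\beta_{k-1}\}$ (with $\beta_0:=b$), which says precisely that each step is forced, \textit{i.e.} $\bar{p}_{i+k}(\beta_{k-1},\lnot\beta_k)=0$ for all $k$; then either $\bar{p}_{i+n-1}(\beta_{n-2},a)=1$ and the chain closes into a global assignment, or it does not, and setting $\alpha_k=\lnot\beta_k$ yields exactly \eqref{eq:18}. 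With this lemma the converse is immediate and needs no contrapositive. Note that this route is cleaner than the paper's, which instead starts from an arbitrary chain of possible outcomes on the contexts $C_3,\dots,C_{n-1}$, tries to patch it at both ends, and iterates a case analysis until De Morgan's law contradicts the assumed non-occurrence of \eqref{eq:18}. A small correction: for $n=3$ the tuple $(\alpha_1,\dots,\alpha_{n-2})$ has one entry, not zero, so there is no degenerate case to treat separately.
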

\begin{proof}
First of all, let us demonstrate that the conditions (\ref{eq:18}) defines a PP. The second, third, $\dots$ conditions can be respectively read as: $M_{i+1}=b \implies M_{i+2}=\lnot \alpha_1$, $M_{i+2}=\lnot\alpha_1\implies M_{i+3}=\lnot \alpha_2$, $M_{i+3}=\lnot\alpha_1\implies M_{i+4}=\lnot \alpha_2$, $\dots$, $M_{i+n-1}=\lnot\alpha_{n-2}\implies M_{i}=\lnot a$. Therefore, for any global assignment compatible with these constrains we have $M_{i+1}=b \implies M_{i}=\lnot a$, meaning that $p_i(a,b)=0$, which contradicts the first condition. Therefore, the occurrence of (\ref{eq:18}) implies LC.

To prove the converse, consider the contrapositive of the statement, \textit{i.e.} let us prove that non-occurrence of (\ref{eq:18}) implies logical noncontextuality. Non-occurrence of (\ref{eq:18}) means that
\begin{align}\label{eq:19}
\bar{p}_{i+1}(b,\alpha_1)&\lor\left(\bigvee_{j=2}^{n-2}\bar{p}_{i+j}(\lnot\alpha_{j-1},\alpha_j)\right)\nonumber\\ &\lor \bar{p}_{i+n-1}(\lnot\alpha_{n-2},a)=1,
\end{align}
for all $(\alpha_1,\dots,\alpha_{n-2})\in O^{n-2}$ whenever $\bar{p}_i(a,b)=1$. For simplicity (but without loss of generality), consider $i=1$. Since $\bar{p}_1(a,b)=1$, ND [Eq. (\ref{eq:11})] implies that
\begin{eqnarray}\label{eq:20}
\begin{cases}
\bar{p}_2(b,\alpha_1)\lor\bar{p}_2(b,\lnot \alpha_1)=1 \\
\bar{p}_n(\alpha_{n-2},a)\lor\bar{p}_n(\lnot \alpha_{n-2},a)=1 \\
\end{cases},
\end{eqnarray}
for all $\alpha_1,\alpha_{n-2}\in O$. From normalization condition to $\bar{p}_3$ and ND, it is always possible to find $(\beta_1,\dots,\beta_{n-2})\in O^{n-2}$ such that
\begin{eqnarray}\label{eq:21}
\bar{p}_3(\beta_1,\beta_2)=\dots=\bar{p}_{n-1}(\beta_{n-3},\beta_{n-2})=1.
\end{eqnarray}
Since $\bar{p}_3(\beta_1,\beta_2)=\bar{p}_{n-1}(\beta_{n-3},\beta_{n-2})=1$, from ND follows that 
\begin{subeqnarray}\label{eq:22}
&&\begin{cases}
\bar{p}_2(b,\beta_1)\lor\bar{p}_2(b,\lnot \beta_1)=1 \\
\bar{p}_2(b,\beta_1)\lor\bar{p}_2(\lnot b, \beta_1)=1 \\
\end{cases},\\
&&\begin{cases}
\bar{p}_n(\beta_{n-2},a)\lor\bar{p}_n(\lnot \beta_{n-2},a)=1 \\
\bar{p}_n(\beta_{n-2},a)\lor\bar{p}_n(\beta_{n-2},\lnot a)=1 \\
\end{cases}.
\end{subeqnarray}

In (\ref{eq:22}) we have a set of Boolean equations that need to be simultaneously satisfied. A trivial solution of Eqs. (\ref{eq:22}) is
\begin{eqnarray}\label{eq:23}
\bar{p}_2(b,\beta_1)=\bar{p}_n(\beta_{n-2},a)=1.
\end{eqnarray}
From (\ref{eq:21}), (\ref{eq:23}), and the fact that $\bar{p}_1(a,b)=1$, it follows that, in this considered case, $(a,b,\beta_1,\dots,\beta_{n-2})$ is a global assignment. Hence, from theorem \ref{thm:1}, the associated behavior is logically NC. Suppose that the most non-trivial case occurs (the other possible cases may be proved in a similar way to what we will consider), that is
\begin{align}\label{eq:24}
\bar{p}_2(b,\lnot \beta_1)&=\bar{p}_2(\lnot b, \beta_1)\nonumber \\&=\bar{p}_n(\lnot \beta_{n-2},a)=\bar{p}_n(\beta_{n-2},\lnot a)=1,
\end{align}
and
\begin{eqnarray}\label{eq:25}
\bar{p}_2(b,\beta_1)=\bar{p}_n(\beta_{n-2},a)=0.
\end{eqnarray}
Taking $\alpha_1=\beta_1$ and $\alpha_{n-2}=\beta_{n-2}$ in (\ref{eq:19}), the non-occurrence of (\ref{eq:18}) reads
\begin{align}\label{eq:26}
\bar{p}_3(\lnot\beta_1,\alpha_2)&\lor\left(\bigvee_{j=4}^{n-2}\bar{p}_j(\lnot\alpha_{j-2},\alpha_{j-1})\right)\nonumber \\&\lor\bar{p}_{n-1}(\lnot \alpha_{n-3},\beta_{n-2})=1.
\end{align}
Since from (\ref{eq:24}), $\bar{p}_2(b,\lnot\beta_1)=1$, the ND condition implies that
\begin{eqnarray}\label{eq:27}
\bar{p}_3(\lnot \beta_1,\beta_2)=\dots=\bar{p}_n(\beta_{n-3},\beta_{n-2}')=1.
\end{eqnarray}
If $\beta_{n-2}'=\lnot\beta_{n-2}$, then $(a,b,\lnot\beta_1,\beta_2,\dots,\beta_{n-1},\lnot \beta_{n-2})$ is a global assignment, and then, from theorem \ref{thm:1}, the associated behavior is logically NC. Consider the other possible case, \textit{i.e.} 
\begin{eqnarray}\label{eq:28}
\bar{p}_{n-1}(\beta_{n-3},\lnot\beta_{n-2})=0.
\end{eqnarray}
Since $\bar{p}_n(\lnot \beta_{n-2},a)=1$, from (\ref{eq:11}) and (\ref{eq:20}), we must have that
\begin{eqnarray}\label{eq:29}
\bar{p}_{n-1}(\lnot \beta_{n-3},\lnot \beta_{n-2})=1.
\end{eqnarray}
Applying the ND condition (\ref{eq:4}) to (\ref{eq:29}):
\begin{eqnarray}\label{eq:30}
\bar{p}_{n-2}(\beta_{n-4},\lnot\beta_{n-3})\lor \bar{p}_{n-2}(\lnot \beta_{n-4},\lnot\beta_{n-3})=1.
\end{eqnarray}
If $\bar{p}_{n-2}(\beta_{n-4},\lnot\beta_{n-3})=1$, then $(a,b,\lnot\beta_1,\beta_2,\dots,\beta_{n-4},\lnot\beta_{n-3},\lnot \beta_{n-2})$ is a global assignment, and, from theorem \ref{thm:1}, we have logical NC. Consider again the case where it does not occur, \textit{i.e.}
\begin{eqnarray}\label{eq:31}
\bar{p}_{n-2}(\beta_{n-4},\lnot\beta_{n-3})=0.
\end{eqnarray}
If we keep repeating this process, always choosing solutions that do not allow the construction of a global assignment, we will have
\begin{align}\label{eq:32}
\bar{p}_{n-1}(\beta_{n-3},\lnot\beta_{n-2})&=\bar{p}_{n-2}(\beta_{n-4},\lnot\beta_{n-3})\nonumber \\ &=\dots=\bar{p}_3(\beta_1,\lnot \beta_2)=0.
\end{align}
However, applying De-Morgan's law to (\ref{eq:32}), it is easy to see that it contradicts the initial hypothesis of non-occurrence of (\ref{eq:18}). Therefore, the non-occurrence of the conditions (\ref{eq:18}) in the $n$-cycle scenario implies logical NC.
\end{proof}

An immediate consequence of theorem \ref{thm:2} is that in the bipartite Bell scenario $(2,2,2)$, the occurrence of the generalization of the Hardy's PP (\ref{eq:16}) are necessary and sufficient for logical nonlocality.

\begin{rem}
In what follows, whenever we refer to a bipartite Bell scenario $(2,k,2)$, we will consider that one part (Alice) has the measurements $\{A_1,\dots,A_k\}$, while the other (Bob) has $\{B_1,\dots,B_k\}$. We will denote by $p_{\mu\nu}$ the joint probability distribution for the context $\{A_\mu,B_\nu\}$.
\end{rem}

\begin{cor}\label{cor:1}
For the Bell scenario $(2,2,2)$, a behavior is LC if, and only if, there exists a joint outcome $(a,b)\in O^2$ and $(\alpha_1,\alpha_2)\in O^2$ such that
\begin{eqnarray}\label{eq:33}
\begin{cases}
p_{ij}(a,b)>0\\
p_{kj}(\alpha_1,b)=0 \\
p_{i\ell}(a,\alpha_2)=0 \\
p_{k\ell}(\lnot\alpha_1,\lnot\alpha_2)=0 \\
\end{cases},
\end{eqnarray}
where $i\neq k$ and $j\neq \ell$.
\end{cor}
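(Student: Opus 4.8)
The plan is to derive Corollary~\ref{cor:1} as a special case of Theorem~\ref{thm:2} applied to the $4$-cycle scenario, using the fact that the Bell scenario $(2,2,2)$ and the CHSH ($4$-cycle) scenario share the same compatibility graph once we fix an ordering of the measurements around the cycle. Concretely, I would label the four measurements cyclically as $M_1=A_i$, $M_2=B_j$, $M_3=A_k$, $M_4=B_\ell$ (so that $i\neq k$ and $j\neq\ell$ automatically, and the contexts $\{M_1,M_2\}$, $\{M_2,M_3\}$, $\{M_3,M_4\}$, $\{M_4,M_1\}$ become exactly $\{A_i,B_j\}$, $\{B_j,A_k\}$, $\{A_k,B_\ell\}$, $\{B_\ell,A_i\}$). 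Since the definition of LC (Definition~\ref{definition:III.3}) depends only on the compatibility structure, a behavior is LC in the Bell scenario iff the corresponding behavior is LC in the $4$-cycle scenario under this relabelling.

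Next I would specialize the PP~(\ref{eq:18}) of Theorem~\ref{thm:2} to $n=4$. Here $n-2=2$, so the parameters are $(a,b)\in O^2$ and $(\alpha_1,\alpha_2)\in O^2$, and the four conditions become $p_1(a,b)>0$, $p_2(b,\alpha_1)=0$, $p_3(\lnot\alpha_1,\alpha_2)=0$, $p_4(\lnot\alpha_2,a)=0$. Translating via the dictionary above and writing each context in the $\{A_\mu,B_\nu\}$ convention used in the corollary (i.e.\ listing Alice's outcome first), these read $p_{ij}(a,b)>0$, $p_{kj}(\alpha_1,b)=0$, $p_{k\ell}(\lnot\alpha_1,\alpha_2)=0$, $p_{i\ell}(a,\lnot\alpha_2)=0$. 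Comparing with~(\ref{eq:33}), the first two lines already match; for the last two I would observe that the statement of the corollary is obtained simply by relabelling the quantifier variable $\alpha_2\mapsto\lnot\alpha_2$ (or, equivalently, noting that $(\alpha_1,\alpha_2)$ ranges over all of $O^2$ iff $(\alpha_1,\lnot\alpha_2)$ does), which turns $p_{k\ell}(\lnot\alpha_1,\alpha_2)=0$ into $p_{k\ell}(\lnot\alpha_1,\lnot\alpha_2)=0$ and $p_{i\ell}(a,\lnot\alpha_2)=0$ into $p_{i\ell}(a,\alpha_2)=0$. Hence the existence of parameters satisfying~(\ref{eq:18}) for $n=4$ is equivalent to the existence of parameters satisfying~(\ref{eq:33}).

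Finally, I would note the one point requiring a small remark: Theorem~\ref{thm:2} quantifies over a starting context index $1\le i\le 4$, whereas Corollary~\ref{cor:1} quantifies over which pair $(i,k)$ of Alice's measurements and which pair $(j,\ell)$ of Bob's measurements play the two roles, together with an orientation of the cycle. Running over all four cyclic starting points of the $4$-cycle and both orientations is exactly the same as running over the ordered choices of $(i,k)$ with $i\ne k$ and $(j,\ell)$ with $j\ne\ell$ in the Bell scenario; since $k=2$ there are only two values for each of Alice's and Bob's indices, so this amounts to a finite and transparent bookkeeping. Putting these three observations together, the occurrence of~(\ref{eq:33}) in the Bell scenario $(2,2,2)$ is equivalent to the occurrence of~(\ref{eq:18}) with $n=4$ in the associated $4$-cycle scenario, and the latter is, by Theorem~\ref{thm:2}, equivalent to logical contextuality. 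The only genuinely delicate step is making sure the index dictionary and the outcome-ordering convention are applied consistently throughout, so that the negations in the third and fourth lines of~(\ref{eq:33}) land on the correct arguments; everything else is a direct instantiation of the already-proved theorem.
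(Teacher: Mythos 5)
Your proposal is correct and follows exactly the same route as the paper's own proof, which simply identifies the Bell scenario $(2,2,2)$ with the $4$-cycle via $M_1=A_1$, $M_2=B_1$, $M_3=A_2$, $M_4=B_2$ and invokes Theorem~\ref{thm:2}. The only difference is that you spell out the index dictionary, the outcome-ordering convention, and the harmless relabelling $\alpha_2\mapsto\lnot\alpha_2$ needed to match Eq.~(\ref{eq:33}), details the paper leaves implicit.
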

\begin{proof}
Just notice that the Bell scenario $(2,2,2)$ is the 4-cycle scenario if one defines $M_1=A_1$, $M_2=B_1$, $M_3=A_2$, and $M_4=B_2$. Hence, theorem \ref{thm:2} can be immediately applied.
\end{proof}

A less trivial consequence of theorem \ref{thm:2} is that the PP (\ref{eq:18}) has a "universality" on simple scenarios. At this point, it is important to recall the Vorob'yev theorem \cite{Vorobev63,Vorobyev67}, which states that acyclic scenarios are contextuality-free \cite{BMC16}. The following theorem states that, in non-acyclic simple scenarios, we just need to look for their cycles induced subgraphs in order to detect LC.

\begin{thm}\label{thm:3}
In a simple scenario which is not free of cycles, the occurrence of the PP (\ref{eq:18}) in one of its cycles is necessary and sufficient for LC.
\end{thm}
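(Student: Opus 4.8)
The plan is to reduce the claim for a general simple non-acyclic scenario to the $n$-cycle case already settled by Theorem \ref{thm:2}. The backward direction is immediate: if the PP (\ref{eq:18}) occurs inside one induced cycle, then the very same logical argument given in the first half of the proof of Theorem \ref{thm:2} (chaining the implications $M_{i+1}=b\implies M_{i+2}=\lnot\alpha_1\implies\cdots\implies M_i=\lnot a$ around the cycle) shows that no global assignment on $O^X$ can restrict to the outcome $(a,b)$ on the relevant context; since $p(a,b)>0$, the behavior is LC by Theorem \ref{thm:1}. So the work is entirely in the forward direction.

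For the forward direction, suppose the behavior $\mathrm{B}$ on the simple scenario $(X,\mathcal{C},O)$ is LC. First I would invoke Theorem \ref{thm:1}: there is a context $C\in\mathcal{C}$ and $s\in O^C$ with $p_C(s)>0$ such that every $t\in O^X$ extending $s$ fails $p_{C'}(t|_{C'})>0$ for some $C'$. The key idea is to restrict the behavior to a suitable induced cycle. The compatibility graph $G$ of a simple scenario is an ordinary graph; since $G$ is not acyclic it contains an induced cycle. I would argue that LC must already be "witnessed" on some induced cycle that passes through the edge $C$ (the pair of measurements forming the context $C$). Concretely, consider the induced subgraph of $G$; if $C$ together with its incident structure forced global consistency on every cycle through it, one could glue the locally consistent (possibilistic) extensions along a spanning forest of $G$ to build a global assignment extending $s$, contradicting LC. This gluing works precisely because in an acyclic graph local consistency implies global consistency (Vorob'yev), so the only obstruction to extending $s$ to all of $X$ lives on a cycle. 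Hence there is an induced cycle $\kappa$ of length $m$ containing the edge $C$ on which the restricted behavior $\mathrm{B}|_\kappa$ (the obvious restriction: keep $p_{C_i}$ for the contexts $C_i$ that are edges of $\kappa$) is itself LC as an $m$-cycle behavior. Then Theorem \ref{thm:2} applied to $\mathrm{B}|_\kappa$ produces exactly an instance of the PP (\ref{eq:18}) of length $m$ living in the edges of $\kappa$, which is the desired conclusion.

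There are two points of care in making the restriction argument precise. One is that $\mathrm{B}|_\kappa$ must be a genuine ND behavior for the $m$-cycle scenario: this is clear since the nondisturbance equalities (\ref{eq:3}) for overlapping contexts of $\kappa$ are a subset of those for $\mathcal{C}$, and each $p_{C_i}$ is already a probability distribution on $O^{C_i}=O^2$. The other, more delicate, is the claim that LC of $\mathrm{B}$ descends to LC of the restriction to \emph{some} induced cycle — a priori the witnessing outcome $s$ on $C$ might extend consistently along every individual cycle yet still fail to extend globally because of how several cycles interact. I expect \textbf{this} to be the main obstacle. The way to handle it is to set up the extension process explicitly on a BFS/DFS spanning tree of $G$ rooted at the edge $C$: starting from $s$, propagate along tree edges using the ND marginal conditions (which, possibilistically, always allow at least one consistent choice at each new vertex by Proposition \ref{prop:1} and normalization), and observe that a genuine failure to complete a global assignment must occur when a non-tree edge closes a cycle; that non-tree edge together with the tree path between its endpoints is contained in an induced cycle, and the obstruction encountered there is exactly a PP (\ref{eq:18}) instance on that cycle. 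I would phrase this as: LC of $\mathrm{B}$ iff the support fails to project onto some $(a,b)$ with $p_C(a,b)>0$ iff some induced cycle carries the PP, with the middle-to-right implication being the tree-propagation/cycle-closing argument and the right-to-left implication being the elementary chaining already in Theorem \ref{thm:2}.

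Finally I would remark that the statement is sharp in the sense that one really does need an \emph{induced} cycle: chords would allow shortcuts that invalidate the linear chain of implications. A clean way to close the write-up is to note that Corollary \ref{cor:1} and Theorem \ref{thm:2} are the $m=4$ and general-$n$-cycle special cases, and Theorem \ref{thm:3} packages them into the structural statement that in simple scenarios LC is a purely local-to-the-cycles phenomenon, fully consistent with Vorob'yev's theorem that acyclic simple scenarios are contextuality-free.
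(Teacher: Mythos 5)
Your backward direction matches the paper's and is fine: a PP (\ref{eq:18}) on any cycle of the compatibility graph forces, by chaining the implications around that cycle, that no global assignment restricts to $(a,b)$ on the witnessing context, so Theorem \ref{thm:1} gives LC. Note, though, that your closing remark about induced cycles is backwards: chords do not invalidate the chain of implications around a cycle (they only impose further constraints on global assignments), so a PP on a non-induced cycle is still a valid witness of LC. This matters because the paper's own argument deliberately works with the ``external'' cycle of a pair of coupled cycles, which need not be induced, and the fundamental cycle closed by a non-tree edge in your spanning-tree scheme need not be induced either.

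The genuine gap is in the forward direction, precisely at the point you yourself flag as the main obstacle. Your tree-propagation sketch asserts that a failure to complete a global assignment must occur when a non-tree edge closes a cycle, and that ``the obstruction encountered there is exactly a PP (\ref{eq:18}) instance on that cycle.'' But a propagation failure only tells you that the particular values you chose to propagate cannot be closed; a PP (\ref{eq:18}) is a much stronger statement --- a chain of zeros $p_{i+1}(b,\alpha_1)=0$, $p_{i+2}(\lnot\alpha_1,\alpha_2)=0,\dots$ that forces $M_{i+1}=b\Rightarrow M_i=\lnot a$ for \emph{every} global assignment. Since the possibilistic marginals generically allow several consistent continuations at each vertex, you must show that if every choice of continuation fails, the resulting collection of zeros assembles into a single chain of the form (\ref{eq:18}) living on one cycle. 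This is exactly the content of the paper's (contrapositive) argument: it starts from a global assignment on the external cycle guaranteed by Theorem \ref{thm:2}, extends it along the shared path by nondisturbance, records a zero whenever the extension fails to close [Eqs. (\ref{eq:34})--(\ref{eq:35})], branches to the alternative, and shows that if every branch fails the accumulated zeros constitute a PP (\ref{eq:18}) in one of the cycles, contradicting the hypothesis. Your proposal names the correct reduction but does not carry out this branching and bookkeeping step, which is where the actual work of the theorem lies; without it, the claim that LC ``descends'' to a single cycle remains unproved.
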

\begin{proof}
From the same argument presented in the proof of theorem \ref{thm:2}, the occurrence of (\ref{eq:18}) implies LC. To prove the converse, consider the contrapositive of the statement, \textit{i.e.} let us assume that no such PPs occur and then show that it implies logical NC.

First of all, consider the simplest situation, where the scenario has only one cycle. In this considered case, the non-occurrence of (\ref{eq:18}) implies that it is always possible to find $(a_1, \dots, a_n)\in O^n$ such that $\bar{p}_1(a_1,a_2)= \dots =\bar{p}_n(a_n, a_1)=1$, where $\bar{p}_k$ denotes the joint possibility distribution for the context $\{M_k,M_{k+1}\}$ in the cycle. We can always extend any of these assignments to the other measurements in the scenario by using the ND condition [Eq. (\ref{eq:11})]. Therefore, the non-occurrence of (\ref{eq:18}) in the cycle implies logical NC. The same can be concluded for a scenario with an arbitrary number of disjoint cycles. 

\begin{figure}
\centering
\includegraphics[scale=0.55,angle=0]{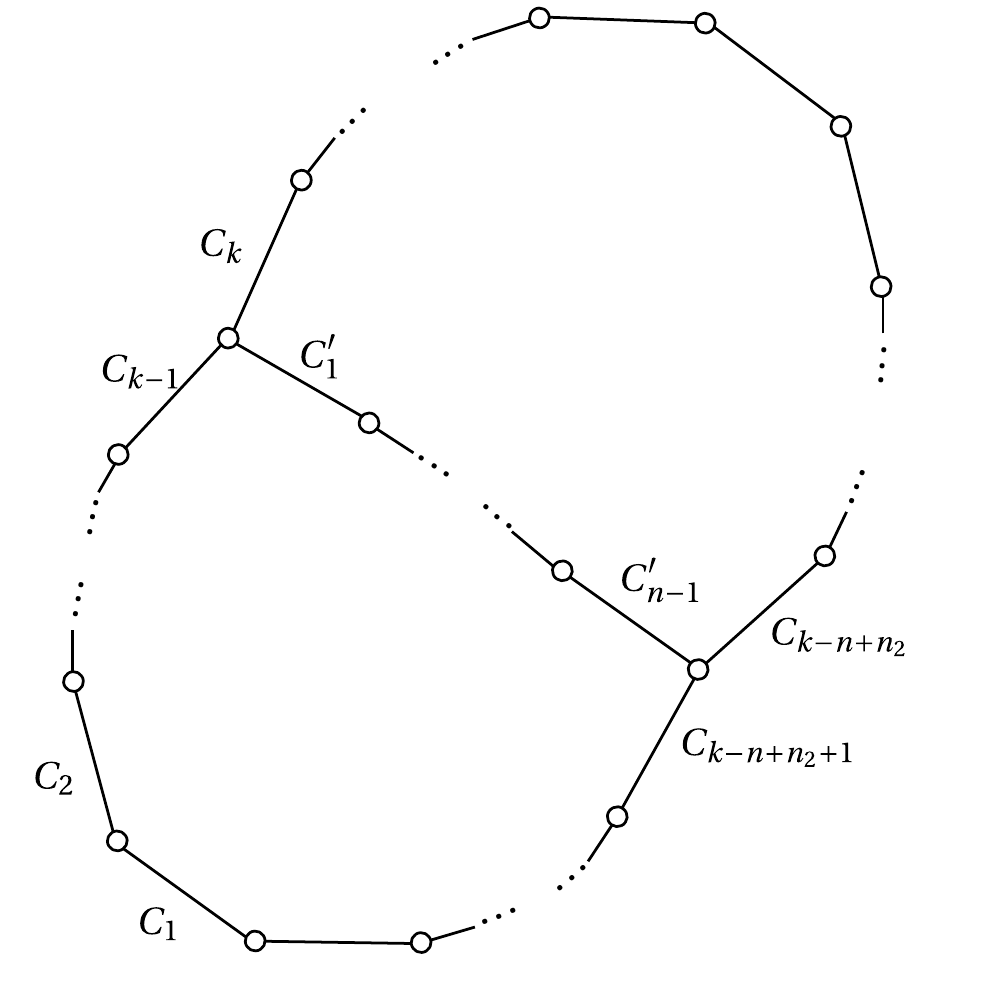}
\caption{Compatibility graph of two coupled cycles with $n_1$ and $n_2$ vertices sharing $n$ of those.}
\label{fig:5}
\end{figure}

Finally, consider the case where there are coupled cycles. For simplicity consider two cycles with $n_1$ and $n_2$ vertices sharing $n$ of those. In this scenario it is easy to see that there are three possible cycles: two cycles with $n_1$ and $n_2$ vertices and an external cycle with $N:=n_1+n_2+2(1-n)$ vertices (see Fig. \ref{fig:3}). From Theorem \ref{thm:2}, the absence of PP (\ref{eq:18}) in the external cycle implies that it is possible to give an assignment $(a_1,\dots,a_N)$ such that $\bar{p}_1(a_1,a_2)=\bar{p}_2(a_2,a_3)=\dots=\bar{p}_N(a_{N},a_1)= 1$, where $\bar{p}_i$ denotes the joint possibility distribution for the context $\{M_i,M_{i+1}\}$ in the external cycle. Applying the non-disturbance condition, we can extend this assignment in the following way: $\bar{p}'_1(a_k,\alpha_2)=\bar{p}'_2 (\alpha_2,\alpha_3)=\dots=\bar{p}'_{n-2}(\alpha_{n-2},\alpha_{n-1})=\bar{p}'_{n-1}(\alpha_{n-1},\alpha_n)$, where $\bar{p}'_i$ denotes the joint probability distributions for the context $C_i'$ (see Fig. \ref{fig:3}). If $\alpha_n=a_{k+n_2-n+1}$, then $(a_1,\dots,a_N,\alpha_2,\dots,\alpha_{n-1})$ is a global assignment, and the theorem is proved. Let us assume that the non-trivial case occurs, that is $p'_{n-1}(\alpha_{n-1},a_{k+n_2-n+1})=0$. Since $\bar{p}_{k+n_2-n+1}(a_{k+n_2+n},a_{k+n_2+n+1})=1$, from ND condition, $\bar{p}'_{n-1}(\lnot \alpha_{n-1},a_{k+n_2+n+1})=1$. Applying ND condition to $\bar{p}'_{n-1}(\lnot \alpha_{n-1},a_{k+n_2+n+1})=1$ we have
\begin{eqnarray}\label{eq:34}
\bar{p}'_{n-2}(\alpha_{n-2},\lnot \alpha_{n-1})\lor \bar{p}'_{n-2}(\lnot\alpha_{n-2},\lnot \alpha_{n-1})=1.
\end{eqnarray}
If $\bar{p}'_{n-2}(\alpha_{n-2},\lnot \alpha_{n-1})=1$, $(a_1,\dots,a_n,\alpha_2,\dots,\alpha_{n-2},\lnot\alpha_{n-1})$ is a global assignment, and then the theorem is proved. Consider again the non-trivial case: $\bar{p}'_{n-2}(\alpha_{n-2}.\lnot \alpha_{n-1})=0$. Until here we have
\begin{eqnarray}\label{eq:35}
\bar{p}'_{n-1}(\alpha_{n-1},a_{k+n_2-n+1})=\bar{p}'_{n-2}(\alpha_{n-2},\lnot\alpha_1)=0.
\end{eqnarray}
Hence, it is easy to see that if we continue this process, always taking the solution in which the theorem is not proved, at the end of the day we will construct a condition (\ref{eq:18}) in at least one of the cycles. Therefore, the absence of a PP (\ref{eq:18}) in all cycles implies logical NC.

This result can be extended to any simple scenario with an arbitrary number of cycles (coupled or not). For that, it is enough that we apply the same arguments above for each pair of coupled cycles of the scenario.
\end{proof}

\begin{cor}\label{cor:2}
For the bipartite Bell scenario $(2,k,2)$, a behavior is LC if, and only if, the following conditions,
\begin{eqnarray}\label{eq:36}
\begin{cases}
p_{ij}(a,b)>0\\
p_{mj}(\alpha_1,b)=0 \\
p_{i\ell}(a,\alpha_2)=0 \\
p_{m\ell}(\lnot\alpha_1,\lnot\alpha_2)=0 \\
\end{cases},
\end{eqnarray}
hold, where $i\neq m$ and $j\neq \ell$.
\end{cor}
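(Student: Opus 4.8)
The plan is to reduce the bipartite Bell scenario $(2,k,2)$ to the setting of Theorem \ref{thm:3} by identifying its compatibility graph. Recall that in $(2,k,2)$ Alice has measurements $\{A_1,\dots,A_k\}$ and Bob has $\{B_1,\dots,B_k\}$, and the contexts are exactly the pairs $\{A_\mu,B_\nu\}$. Hence the compatibility graph is the complete bipartite graph $K_{k,k}$, which is a simple scenario (every context has two measurements) and, for $k\geq 2$, is certainly not free of cycles. Therefore Theorem \ref{thm:3} applies directly: a behavior is LC if and only if the PP (\ref{eq:18}) occurs in one of the induced cycles of $K_{k,k}$.

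The substantive step is then to show that the occurrence of (\ref{eq:18}) on \emph{some} induced cycle of $K_{k,k}$ is equivalent to the occurrence of the four-condition scheme (\ref{eq:36}) on \emph{some} $4$-cycle of $K_{k,k}$, i.e. on some sub-square $\{A_i,B_j\},\{A_m,B_j\},\{A_m,B_\ell\},\{A_i,B_\ell\}$ with $i\neq m$ and $j\neq\ell$. One direction is immediate: a sub-square is itself an induced $4$-cycle of $K_{k,k}$, and specializing (\ref{eq:18}) to $n=4$ on that cycle — with vertices ordered $M_1=A_i$, $M_2=B_j$, $M_3=A_m$, $M_4=B_\ell$ — reproduces exactly (\ref{eq:36}) after relabeling the free bits $(\alpha_1,\alpha_2)$. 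For the converse I would argue that an induced cycle of $K_{k,k}$ of length $2r\geq 4$ carrying the PP (\ref{eq:18}) forces, via the same chain-of-implications argument used in the proof of Theorem \ref{thm:2}, a logical inconsistency already visible on a $4$-sub-square. Concretely: if (\ref{eq:18}) holds around a $2r$-cycle but \emph{no} sub-square satisfies (\ref{eq:36}), then for every square the negation of (\ref{eq:36}) gives a disjunctive constraint of the form (\ref{eq:19}); feeding these into the cycle and using the ND condition (Proposition \ref{prop:1}) exactly as in Theorem \ref{thm:2}'s converse, one builds a consistent assignment around the $2r$-cycle, contradicting the assumed occurrence of (\ref{eq:18}). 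Thus LC $\Leftrightarrow$ (\ref{eq:18}) on some cycle $\Leftrightarrow$ (\ref{eq:36}) on some square.

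The main obstacle I anticipate is the bookkeeping in this "square-reduction": an induced cycle in $K_{k,k}$ visits Alice and Bob vertices alternately, so a length-$2r$ cycle uses $r$ distinct $A$'s and $r$ distinct $B$'s, and one must check that the propagation argument of Theorem \ref{thm:2} can always be localized to a single $4$-sub-square rather than genuinely requiring the longer cycle. The cleanest way around this is probably not to re-run the chain argument at all, but to invoke Theorem \ref{thm:2} with $n=4$ twice: since any $4$-sub-square of $K_{k,k}$ is a genuine $4$-cycle scenario, Corollary \ref{cor:1} says LC \emph{restricted to that square's data} is equivalent to (\ref{eq:36}) there; and Theorem \ref{thm:3} says LC of the whole $(2,k,2)$ behavior is equivalent to LC on some induced cycle, which — by the length-reduction step — can be taken to be a $4$-cycle. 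So in the write-up I would: (i) identify the graph as $K_{k,k}$ and invoke Theorem \ref{thm:3}; (ii) note every induced $4$-cycle is a sub-square and match (\ref{eq:18})$|_{n=4}$ with (\ref{eq:36}); (iii) handle the reduction from longer induced cycles to $4$-cycles by the ND-propagation argument, citing the proof of Theorem \ref{thm:2} to avoid repeating it. Step (iii) is the only place where real care is needed; steps (i) and (ii) are essentially a relabeling, parallel to the proof of Corollary \ref{cor:1}.
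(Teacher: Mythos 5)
Your proposal is correct and follows essentially the same route as the paper: identify the compatibility graph of $(2,k,2)$ as the complete bipartite graph $K_{k,k}$, invoke Theorem \ref{thm:3}, and match the $n=4$ instance of the PP (\ref{eq:18}) on each $4$-sub-square with (\ref{eq:36}) via Corollary \ref{cor:1}. The reduction you flag in step (iii) is the same point the paper dispatches with its remark that the external cycle of any pair of coupled $4$-cycles is again a $4$-cycle --- equivalently, every chordless cycle of $K_{k,k}$ has length four --- so no separate ND-propagation argument beyond what is already in the proof of Theorem \ref{thm:3} is required.
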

\begin{proof}
The Bell scenario $(2,k,2)$ consists of $[k(k-1)/2]^2$ coupled cycles with four vertices (see Fig. \ref{fig:6}). Notice that for any pair of coupled 4-cycles, the external cycle is also a 4-cycle (see proof of theorem \ref{thm:3} and Fig. \ref{fig:6}). Each choice of indexes in (\ref{eq:36}) defines a possible PP (\ref{eq:18}) in one of the possible 4-cycle subgraphs. Therefore the assumption follows as a directly consequence of theorem \ref{thm:3} and corollary \ref{cor:1}.
\end{proof}

\begin{figure}
\centering
\includegraphics[scale=0.31]{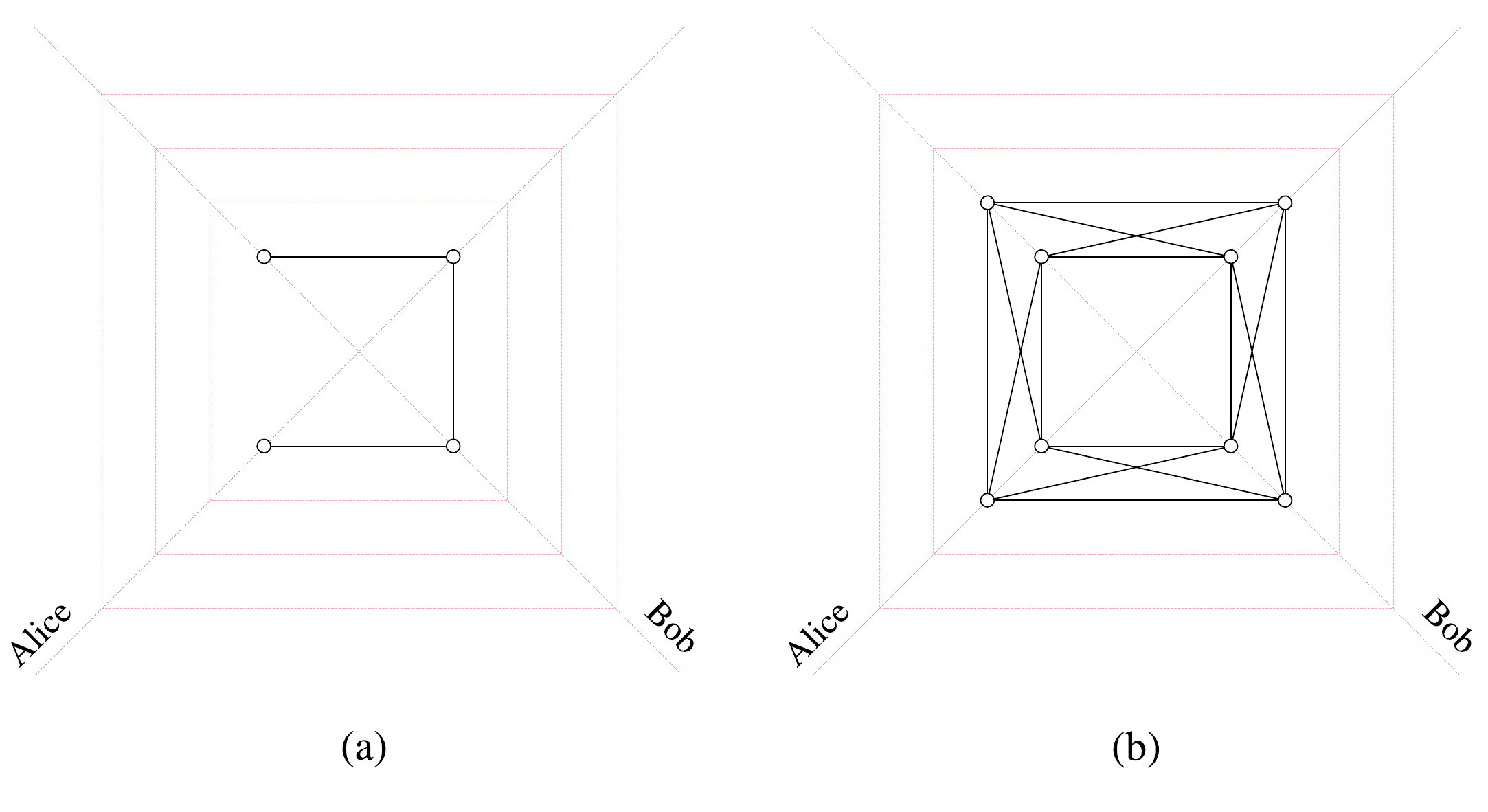}
\caption{Compatibility graph of bipartite Bell scenarios. (a) $(2,2,\ell)$; (b) $(2,4,\ell)$.}
\label{fig:6}
\end{figure}

An useful way to draw the compatibility graph of bipartite Bell scenarios emphasizing their cycles is to represent the measurements by vertices on two different diagonal lines. Since this graph is bipartite, we connect two vertices if, and only if, they belong to different lines (see Fig. \ref{fig:6}). From that construction, it is very clear that a Bell-type scenario $(2,k,2)$ has cycles of up to $2k$ vertices as induced subgraphs. In each of these cycles, it is possible to construct a PP (\ref{eq:18}). There is an inequality-free proof for quantum nonlocality based on this construction, called \textit{Hardy ladder proof}, since the Boschi \textit{et al.} contribution \cite{BBMH97}. It should be stressed that the occurrence of these PP implies the occurrence of (\ref{eq:36}). This fact is a consequence of the completeness given by Corolary \ref{cor:2}, or can be proved directly (\textit{cf.} Ref. \cite{MF12}).

\subsection{Several Outcomes Possibilistic Paradoxes}

So far we have demonstrated that the occurrence of a specific kind of PPs is a necessary and sufficient condition for LC. Such a result applies for every simple scenario with dichotomic measurements. Our goal in this paper is to state results as general as possible. So, a natural question is whether it is possible to extend these results for scenarios where the measurements are not dichotomic. As argued by Mansfield and Fritz \cite{MF12} to the Bell scenarios $(2,k,\ell)$ the answer is no. However, we can shed some light on this issue, at least in a restricted class of scenarios. Such scenarios are the "generalized" $n$-cycle ones, where we have measurements with more than two possible outcomes.

\begin{thm}\label{thm:4}
In a generalized $n$-cycle scenario, where each measurement has $\ell$ possible outcomes, a behavior is LC if, and only if there exists $(a,b)\in O^2$ such that
\begin{eqnarray}\label{eq:37}
\bar{p}_{\mu}(a,b)=1,
\end{eqnarray}
together with $(\alpha_i^1,\dots,\alpha_i^\ell)\in O^\ell$ such that
\begin{align}\label{eq:38}
\bigvee_{i=1}^{m_1} \bar{p}_{\mu+1}(b,\alpha_1^i) &\lor \bigvee_{i=m_1+1}^\ell\bigvee_{j=1}^{m_2} \bar{p}_{\mu+2}(\alpha_1^i,\alpha_2^j) \nonumber \\ &\lor \dots \lor \bigvee_{i=m_\nu+1}^\ell \bigvee_{j=1}^{m_{\nu+1}} \bar{p}_{\mu+\nu-1}(\alpha_\nu^i,\alpha_{\nu+1}^j)\nonumber \\ &\lor\dots\lor \bigvee_{i=m_{n-2}+1}^{\ell} \bar{p}_{\mu+n-1}(\alpha_{n-2}^i,a)=0,
\end{align}
$\alpha_i^x\neq \alpha_i^y$ if $x\neq y$, $m_i\leq \ell$, $i\leq n-2$.
\end{thm}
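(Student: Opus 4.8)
The plan is to mirror the structure of the proof of Theorem~\ref{thm:2}, but carry an extra bookkeeping index for the $\ell$ outcomes. First I would verify the easy direction: that the simultaneous occurrence of (\ref{eq:37}) and (\ref{eq:38}) implies LC. Here the interpretation is that (\ref{eq:38}) says that \emph{every} way of propagating a value-assignment around the cycle starting from $M_\mu = a,\ M_{\mu+1}=b$ is forbidden by some zero-probability condition: having fixed $M_{\mu+1}=b$, the assignment $M_{\mu+2}$ must avoid the first $m_1$ outcomes $\alpha_1^1,\dots,\alpha_1^{m_1}$ (else a forbidden pair $(b,\alpha_1^i)$ occurs), then for each remaining choice $M_{\mu+2}=\alpha_1^i$ with $i>m_1$ the next measurement $M_{\mu+3}$ must avoid $\alpha_2^1,\dots,\alpha_2^{m_2}$, and so on; the final clause forbids closing the loop back to $M_\mu=a$. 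Since no value of the intermediate measurements survives all these constraints, no global assignment can restrict to $(a,b)$ on $C_\mu$, while $p_\mu(a,b)>0$; by Theorem~\ref{thm:1} the behavior is LC. Writing this out carefully is the only slightly delicate part of the easy direction, because one has to phrase the cascade of implications as a disjunction over branching outcomes rather than a single chain as in (\ref{eq:18}).

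For the converse I would again prove the contrapositive: assume no data of the form (\ref{eq:37})--(\ref{eq:38}) occurs, and construct a global assignment. Fix any $(a,b)$ with $\bar p_\mu(a,b)=1$ (WLOG $\mu=1$). Non-occurrence of (\ref{eq:38}) for \emph{all} admissible parameter choices $(\alpha_i^x)$, $(m_i)$ means: for every partial assignment of the measurements $M_2,\dots,M_{k}$ that is consistent with all the edge-possibility relations so far, there is a consistent extension to $M_{k+1}$ — i.e. one can never be ``blocked''. I would formalize this as: the negation of (\ref{eq:38}), quantified over all parameters, is exactly the statement that for each step $\nu$ and each consistent choice of $M_{1+\nu}$, some outcome of $M_{2+\nu}$ is jointly possible with it, \emph{and} the last step can be chosen to close up with $M_1=a$. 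Then a greedy/inductive construction along the cycle, using the ND condition (\ref{eq:11}) at each vertex exactly as in (\ref{eq:20})--(\ref{eq:22}) of Theorem~\ref{thm:2}, yields a full assignment $(a,b,\gamma_2,\dots,\gamma_{n-1})$ with all consecutive pairs possible; Theorem~\ref{thm:1} then gives logical noncontextuality.

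The main obstacle, as in Theorem~\ref{thm:2}, will be the closing step: a naive greedy walk around the cycle may paint itself into a corner at the last edge, where the chosen value of $M_n$ is incompatible with $M_1=a$. I expect to handle this exactly as before — when the trivial closure fails, use ND to flip to the complementary branch and push the obstruction one vertex back, iterating; if one never succeeds in closing up, the accumulated chain of forced zeros reconstructs, via De~Morgan, precisely an instance of (\ref{eq:38}), contradicting the hypothesis. The only new feature relative to the dichotomic case is that ``flip to the complement'' becomes ``choose another of the $\ell-m_\nu$ surviving outcomes'', so the backward-propagation argument has to be run over the set of available outcomes at each stage rather than over a single bit; this is why the statement carries the full family of parameters $m_i$ and $\alpha_i^x$ instead of a single string. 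I would also note at the end that setting $\ell=2$, $m_i=1$, $\alpha_i^1 = \lnot\alpha_i$ recovers Theorem~\ref{thm:2} as a special case, which serves as a consistency check on the formulation.
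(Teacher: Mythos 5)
Your first direction matches the paper's: interpret (\ref{eq:38}) as a cascade of uniformly forbidden outcome-sets that blocks every extension of $(a,b)$ around the cycle, then invoke Theorem~\ref{thm:1}. For the converse the two arguments genuinely diverge: the paper disposes of it in two sentences, asserting that LC forces a paradox of type (\ref{eq:18}) ``by Theorem~\ref{thm:2}'' and then observing that (\ref{eq:18}) is the instance $m_1=\dots=m_{n-2}=1$ of (\ref{eq:37})--(\ref{eq:38}); you instead propose to rerun the full backtracking construction of Theorem~\ref{thm:2} with $\ell$-ary branching. Your route is the more honest one --- the paper's reduction is strained, since Theorem~\ref{thm:2} is stated for dichotomic measurements and for $\ell>2$ logical contextuality need not produce an $m_i=1$ paradox at all (that is precisely why the parameters $m_i$ appear). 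The one point you should make explicit, because it is what makes the bookkeeping close: rather than chasing individual paths and ``flipping to another surviving outcome,'' define the forward-reachable sets $T_1=\{x:\bar p_{\mu+1}(b,x)=1\}$ and $T_{\nu+1}=\{y:\exists\,x\in T_\nu,\ \bar p_{\mu+\nu+1}(x,y)=1\}$, and take $\{\alpha_\nu^1,\dots,\alpha_\nu^{m_\nu}\}=O\setminus T_\nu$. By construction every pair in $T_\nu\times(O\setminus T_{\nu+1})$ is impossible, which is exactly the \emph{uniform} block structure that (\ref{eq:38}) demands (the same forbidden set for every surviving outcome of the previous measurement); a path-by-path backtracking accumulates zeros that need not organize themselves into that form without this observation. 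With the $T_\nu$ in hand, ``no global assignment extends $(a,b)$'' is literally the statement that every element of $T_{n-2}$ is incompatible with $M_\mu=a$, i.e.\ the last clause of (\ref{eq:38}), and both directions of the theorem follow; your closing consistency check ($\ell=2$, $m_i=1$ recovers Theorem~\ref{thm:2}) is the same remark the paper makes.
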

\begin{proof}
From arguments similar to those presented so far, it is straightforward to verify that the simultaneous occurrence of conditions in Eqs. (\ref{eq:37}) and (\ref{eq:38}) implies LC. More precisely, for any global assignment compatible with the condition of Eq. (\ref{eq:38}) we must have that $\bar{p}_\mu(a,b)=0$, which contradicts (\ref{eq:37}). Now suppose that a given behavior is LC. From theorem \ref{thm:2} it implies that some PP (\ref{eq:18}) must occurs, otherwise it would be possible to build a global assignment by using a pair of outcomes. However, PP (\ref{eq:18}) is equivalent to the conditions of Eq. (\ref{eq:37}) together with Eq. (\ref{eq:38}) taking $m_1=\dots=m_{n-2}=1$.
\end{proof}

If we restrict our attention to the bipartite Bell scenarios, we recover Mansfield and Fritz's results \cite{MF12}. In that paper, the authors demonstrated that the non-occurrence of a "coarse-grained Hardy paradox" is equivalent to LC in the Bell scenarios  $(2,2,\ell)$. This is exactly the same content of the theorem \ref{thm:4} if one takes $n=4$.

Theorem \ref{thm:4} states that the occurrence of a PP defined by (\ref{eq:37}) together with (\ref{eq:38}) is necessary and sufficient to LC in the generalized $n$-cycle scenarios. However, a much simpler way of constructing an inequality-free proof for contextuality, which is particularly interesting when dealing with quantum implementations, was proposed by Chen \textit{et al.} \cite{CCXSWK13}.

\begin{example}[Chen \textit{et al.} PP]
Chen \textit{et al.} inequality-free proof \cite{CCXSWK13} is constructed in a generalized $4$-cycle scenario with $\ell$ possible outcomes, $O=\{0,1,\dots,\ell-1\}$, and it is based on the following PP:
\begin{eqnarray}\label{eq:Chen}
\begin{cases}
p_i(a_i<a_{i+1})>0 \\
p_{i+1}(a_{i+1}>a_{i+2})=0 \\
p_{i+2}(a_{i+2}>a_{i+3})=0 \\
p_{i+3}(a_{i+3}>a_i)=0
\end{cases},
\end{eqnarray}
where
\begin{subeqnarray}
p_\mu(a_\mu< a_{\mu+1}):=\sum_{x<y}p_\mu(x,y),\\
p_\mu(a_\mu>a_{\mu+1}):=\sum_{x>y}p_\mu(x,y).
\end{subeqnarray}
In order to demonstrate that Eq. (\ref{eq:Chen}) actually defines a PP, consider, without loss of generality, $i=1$. Notice that if such conditions occurs, the last $n-1$ conditions imply that for any global assignment $(a_1,a_2,a_3,a_4)$ we must have that $a_2\leq a_3\leq a_4\leq a_1$. Therefore, we are enforced to conclude that $p_1(a_1<a_2)=0$, which contradicts the first condition. Therefore, the occurrence of the conditions of Eq. (\ref{eq:Chen}) implies LC.
\end{example}

\subsection{Strong Contextuality and PR-Boxes}

In Ref. \cite{Mansfield17} the author discusses some consequences and applications of the completeness of "Hardy-type PP" (see Example \ref{example:IV.1}) for logical nonlocality. One of those is that the Popescu-Rohrlich boxes [Example \ref{example:III.3}] are the only strongly nonlocal behaviors for the CHSH scenario. The following theorem generalizes this result for the $n$-cycle scenarios.

\begin{figure}
\centering
\includegraphics[scale=0.3]{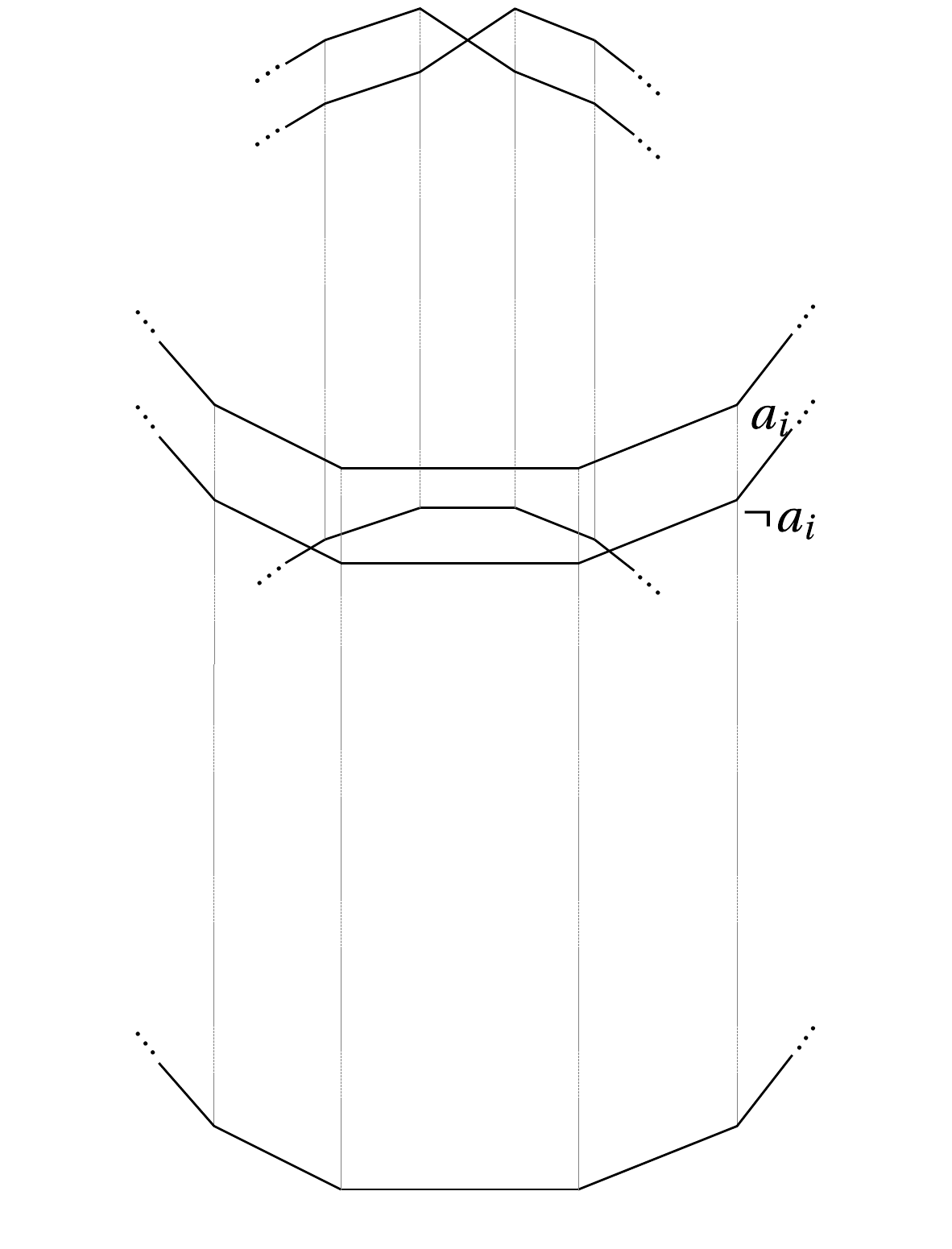}
\caption{Bundle diagram of a strong contextual behavior for an $n$-cycle.}
\label{fig:7}
\end{figure}

\begin{thm}\label{thm:5}
For an $n$-cycle scenario, every strongly contextual behavior is such that, for some $1\leq k\leq n$,
\begin{subeqnarray}\label{eq:41}
\bar{p}_{k}(a_k,\lnot a_{k+1})=\bar{p}_k(\lnot a_k, a_{k+1})=1, \\
\bar{p}_{k}(a_k,a_{k+1})=\bar{p}_k(\lnot a_k,\lnot a_{k+1})=0,
\end{subeqnarray}
and for $i\neq k$,
\begin{subeqnarray}\label{eq:42}
\bar{p}_{i}(a_i,a_{i+1})=\bar{p}_i(\lnot a_i,\lnot a_{i+1})=1, \\
\bar{p}_{i}(a_i,\lnot a_{i+1})=\bar{p}_i(\lnot a_i,a_{i+1})=0.
\end{subeqnarray}
\end{thm}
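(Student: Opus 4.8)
\textbf{Proof proposal for Theorem \ref{thm:5}.}

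The plan is to characterize strong contextuality (SC) in the $n$-cycle directly via Theorem \ref{thm:1}: a behavior is SC precisely when $\mathrm{Supp}(\mathrm{B})=\emptyset$, i.e.\ no assignment $(a_1,\dots,a_n)\in O^n$ satisfies $\bar{p}_i(a_i,a_{i+1})=1$ for all $i$. I would first reduce the problem to a purely combinatorial one about the ``possibility graph'' of the behavior: for each context $C_i=\{M_i,M_{i+1}\}$, the support $\{(a,b):\bar{p}_i(a,b)=1\}$ is a nonempty subset of $O^2=\{0,1\}^2$, and by the ND condition (Proposition \ref{prop:1}) its two marginals onto $M_i$ and $M_{i+1}$ are constrained to agree with those of the neighbouring contexts. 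A global assignment is exactly a path through this cyclic sequence of edge-sets, so SC means this cyclic ``gluing'' problem has no solution.

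Next I would classify the possible local supports at each context. Each $\bar{p}_i$ is one of the $15$ nonempty subsets of $\{0,1\}^2$, but ND plus the requirement that the behavior be ND (hence has consistent single-measurement marginals) forces each marginal $\bar{p}_i|_{M_i}$ to be either $\{0\}$, $\{1\}$, or $\{0,1\}$. I would argue that if \emph{any} marginal $\bar{p}_i|_{M_j}$ were a singleton $\{c\}$, then propagating that deterministic value around the cycle via ND would either be consistent (giving a partial global assignment one could complete, contradicting SC) or force some $\bar{p}_k$ to be the empty distribution, which is impossible. Hence in a SC behavior every single-measurement marginal must be full, $\{0,1\}$. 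Given full marginals on both coordinates, the only subsets of $\{0,1\}^2$ with both full marginals are: the ``diagonal'' $D=\{(0,0),(1,1)\}$, the ``anti-diagonal'' $\bar{D}=\{(0,1),(1,0)\}$, and the full set $\{0,1\}^2$; if some $\bar{p}_i$ were the full set, one could freely choose at that link and again build a global assignment, so each $\bar{p}_i$ is exactly $D$ or $\bar{D}$ (up to relabelling $M_i\mapsto a_i$, these are the two cases in Eqs.\ (\ref{eq:41})--(\ref{eq:42})).

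Finally I would count parities. Composing the relations around the full cycle, a global assignment exists iff an even number of the contexts carry the anti-diagonal relation (each anti-diagonal flips the carried bit, each diagonal preserves it, and consistency around a loop of length $n$ requires an even number of flips). Therefore SC holds iff an \emph{odd} number of contexts are anti-diagonal; and among these the ``extremal'' or canonical SC behavior — the one singled out in the statement — is the case of exactly one anti-diagonal context $k$, which is precisely Eqs.\ (\ref{eq:41})--(\ref{eq:42}). To finish I would invoke Theorem \ref{thm:2} (or Theorem \ref{thm:3}) to confirm that these parity conditions indeed rule out every global assignment, matching SC, and note that the stated behavior is the $n$-cycle analogue of the PR-box (Table \ref{table:3}, Fig.\ \ref{fig:7}). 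The main obstacle I anticipate is the bookkeeping in the ``propagate a singleton marginal around the cycle'' step: one must handle carefully the coupled constraints coming from overlapping contexts and show that the only escape from building a global assignment is the forbidden empty local distribution — this is essentially the same inductive ``always pick the non-trivial branch'' argument used in the proof of Theorem \ref{thm:2}, specialized to the situation where one is tracking which edges of $\{0,1\}^2$ remain available.
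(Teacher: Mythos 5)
Your route is genuinely different from the paper's. The paper starts from Theorem \ref{thm:2}: strong contextuality forces every possible joint outcome $(a,b)$ of a context to head a possibilistic paradox (\ref{eq:18}), and it then chases the resulting chains of zeros and ones around the cycle, applying the ND condition at each step, until the supports of all contexts are pinned down to the form of Eqs.~(\ref{eq:41})--(\ref{eq:42}). You instead classify the local supports directly: all single-measurement marginals full, each context's support equal to the diagonal $D$ or the anti-diagonal $\bar{D}$, and a parity count of anti-diagonals. Your approach is more self-contained (it does not rely on Theorem \ref{thm:2}) and yields the sharper intermediate statement ``SC iff an odd number of contexts carry $\bar{D}$ in any fixed labelling,'' which the paper obtains only implicitly.

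Two steps need repair, however. First, your enumeration of the nonempty subsets of $\{0,1\}^2$ with both marginals full is wrong: besides $D$, $\bar{D}$ and the full set, every three-element subset also has both marginals full, and these must be excluded as well. The exclusion does go through --- cut the cycle at a context $C_i$ whose support misses only the pair $(x,y)$, propagate greedily from $M_{i+1}$ with the starting value $\lnot y$ along the path back to $M_i$, and the closing pair $(a_i,\lnot y)\neq(x,y)$ lies in the support, producing a global assignment --- but as written your case analysis silently skips these four subsets per context. Second, your conclusion ``SC iff an odd number of anti-diagonal contexts'' is not yet the theorem: you call the single-anti-diagonal case ``extremal or canonical,'' as if the statement covered only a subcase. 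In fact, flipping the outcome label of a single measurement $M_j$ toggles the diagonal/anti-diagonal character of exactly the two contexts $C_{j-1}$ and $C_j$, preserving the parity; since the $a_i$ in Eqs.~(\ref{eq:41})--(\ref{eq:42}) are existentially quantified, any odd configuration can be relabelled to one with exactly one anti-diagonal context, so your characterization does prove the theorem --- but this relabelling step must be stated explicitly rather than gestured at.
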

\begin{proof}
Recall that SC means that no global assignment can be defined [Def. \ref{defn:III.5}]. If a behavior for the $n$-cycle scenario is strongly contextual, then, from theorem \ref{thm:2}, every joint outcome with non-zero probability must be associated to a PP (\ref{eq:18}). Hence, for an arbitrary joint outcome $(a,b)$, $\bar{p}_1(a,b)=1$ implies that
\begin{eqnarray}\label{eq:43}
\bar{p}_2(b,\alpha_1)= \bar{p}_3(\lnot\alpha_{1},\alpha_{2})=\dots=\bar{p}_{n}(\lnot \alpha_{n-2},b)=0.
\end{eqnarray}
for some $(\alpha_1,\dots,\alpha_{n-2})\in O^{n-2}$. Since $\bar{p}_1(a,b)=1$, from (\ref{eq:44}) and ND condition [Eq. (\ref{eq:11})] follows
\begin{eqnarray}\label{eq:44}
\bar{p}_2(b,\lnot\alpha_1)&=&\bar{p}_3(\lnot \alpha_2,\lnot\alpha_3) \nonumber \\ &=&\dots=\bar{p}_{n-1}(\lnot\alpha_{n-2},\lnot a)=1,
\end{eqnarray}
and 
\begin{eqnarray}\label{eq:45}
\bar{p}_2(\lnot b,\alpha_1)&=&\bar{p}_3(\alpha_2,\alpha_3) \nonumber \\ &=&\dots=\bar{p}_{n-1}(\alpha_{n-2}, a)=1.
\end{eqnarray}
Since we suppose SC, we are enforced to impose that
\begin{eqnarray}\label{eq:46}
\bar{p}_1(\lnot a,b)=\bar{p}_1(a,\lnot b)=0,
\end{eqnarray}
otherwise $(a,\lnot b,\alpha_1,\dots,\alpha_{n-2})$ or $(\lnot a,b,\alpha_1,\dots,\alpha_{n-2})$ would be global assignments. Therefore, from (\ref{eq:44}--\ref{eq:46}) and ND condition, we must have that
\begin{eqnarray}\label{eq:47}
\bar{p}_1(\lnot a,\lnot b)=1.
\end{eqnarray}
Since the joint outcome $(\lnot a,\lnot b)$ must also be associated to a PP (\ref{eq:18}), from (\ref{eq:44}--\ref{eq:46}), we must have that
\begin{eqnarray}\label{eq:48}
\bar{p}_2(\lnot b,\lnot\alpha_1)&=&\bar{p}_3(\alpha_1,\lnot\alpha_2)\nonumber \\&=&\dots=\bar{p}_{n}(\alpha_{n-2},\lnot a)=0.
\end{eqnarray}
Defining $a_1=a$, $a_2=b$, $a_3=\lnot \alpha_1$, $a_4=\lnot \alpha_2$, and so on, the conditions (\ref{eq:43}--\ref{eq:48}) is exactly the same of (\ref{eq:41}--\ref{eq:42}) with $k=n$.
\end{proof}

It is a well-known fact that the Popescu-Rorhlich box is not compatible with quantum predictions \cite{Scarani19}. In fact, this behavior maximally violates the noncontextuality inequality for the CHSH scenario, thus violating its Tsirelson's bound \cite{Tsirelson93}. The same can be concluded for the $n$-cycle scenarios, where the Tsirelson's bound is given by Eq. (\ref{eq:7}). From theorem \ref{thm:5}, in any SC behavior the conditions (\ref{eq:41}) and (\ref{eq:42}) occur. In addition, it is easy to verify that their occurrence implies maximal violations of the $n$-cycle noncontextuality inequalities [Eq. (\ref{eq:6})], and consequently the Tsirelson's bound. Therefore, we have demonstrated the following corollary.

\begin{cor}
There is no strongly contextual quantum behavior for the $n$-cycle scenarios.
\end{cor}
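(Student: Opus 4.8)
The plan is to reduce the statement to Theorem~\ref{thm:5} together with the Tsirelson-type bound~(\ref{eq:7}) for the $n$-cycle inequalities~(\ref{eq:6}). By Theorem~\ref{thm:5}, any strongly contextual behavior for the $n$-cycle scenario has, for some index $k$, the possibilistic structure~(\ref{eq:41})--(\ref{eq:42}): the context $C_k$ is ``anticorrelated'' and every other context $C_i$ is ``correlated'', in the sense specified there. The first observation I would record is that this possibilistic information alone already fixes every correlator $\langle M_iM_{i+1}\rangle=2(p_i(0,0)+p_i(1,1))-1$ to be exactly $\pm1$: reading off the two allowed joint outcomes of each context, one finds $\langle M_iM_{i+1}\rangle=(-1)^{a_i\oplus a_{i+1}}$ for $i\ne k$, while $\langle M_kM_{k+1}\rangle=-(-1)^{a_k\oplus a_{k+1}}$. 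In particular the sign pattern $\gamma_i:=\langle M_iM_{i+1}\rangle\in\{-1,+1\}$ is well defined.

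Next I would feed exactly this $\gamma$ into the inequality~(\ref{eq:6}); for it to be one of the admissible $n$-cycle inequalities I must check that the number of indices $i$ with $\gamma_i=-1$ is odd. From the formulas above, for $i\ne k$ one has $\gamma_i=-1$ iff $a_i\ne a_{i+1}$, while $\gamma_k=-1$ iff $a_k=a_{k+1}$. Using that $\sum_{i=1}^{n}(a_i\oplus a_{i+1})\equiv 0\pmod 2$ (each $a_j$ occurs twice in the sum), the set $\{i:a_i\ne a_{i+1}\}$ has even cardinality; distinguishing the two cases $a_k=a_{k+1}$ and $a_k\ne a_{k+1}$ — in the first the context $C_k$ adds one to the count of negative signs, in the second it does not — a one-line parity bookkeeping shows the total number of $\gamma_i=-1$ is odd in both cases. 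With this admissible choice, the left-hand side of~(\ref{eq:6}) equals $\sum_{i=1}^n\gamma_i^2=n$, its algebraic maximum, which is strictly above the noncontextual bound $n-2$; so every strongly contextual behavior maximally violates an $n$-cycle inequality.

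It then remains to compare the value $n$ with the quantum bound~(\ref{eq:7}). For even $n$ that bound is $n\cos(\pi/n)<n$ since $0<\cos(\pi/n)<1$ for $n\ge3$. For odd $n$, writing $c=\cos(\pi/n)\in(0,1)$, clearing the positive denominator turns $\tfrac{3nc-n}{1+c}<n$ into $2nc<2n$, i.e. $c<1$, which again holds. Hence a quantum behavior, whose value on~(\ref{eq:6}) is at most~(\ref{eq:7})$<n$, cannot attain the value $n$ forced by~(\ref{eq:41})--(\ref{eq:42}), so by Theorem~\ref{thm:5} it cannot be strongly contextual — which is the corollary. The only point requiring real care is the parity check in the middle step: one has to make sure that the presence of the single anticorrelated context $C_k$ flips the count of negative signs in precisely the way needed for the read-off sign pattern to be a genuine $n$-cycle inequality; everything else is routine once Theorem~\ref{thm:5} is available.
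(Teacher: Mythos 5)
Your argument is correct and is essentially the paper's own proof: the paper likewise derives the corollary by noting that Theorem~\ref{thm:5} forces any strongly contextual behavior to maximally violate (value $n$) one of the $n$-cycle inequalities~(\ref{eq:6}), which exceeds the quantum bound~(\ref{eq:7}). The only difference is that you explicitly carry out the parity bookkeeping showing the read-off sign pattern has an odd number of $\gamma_i=-1$ and the comparison of~(\ref{eq:7}) with $n$, steps the paper dismisses as ``easy to verify''; both checks are done correctly.
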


In addition to the fact that they are "post-quantum", the SC behaviors of the $n$-cycle scenarios also resemble the PR-boxes [Example \ref{example:III.3}] in the fact that their bundle diagrammatic representations are "Möbius strips" (see Fig. \ref{fig:7}), exactly as happens in the CHSH scenario (see Fig. \ref{fig:4} (c)).

\subsection{Inequality-Free proofs in the $n$-cycle Scenarios}

In theorem \ref{thm:2} we demonstrated that the occurrence of the PP (\ref{eq:18}) is a necessary and sufficient condition for a given ND behavior in the $n$-cycle scenarios exhibit LC. Herein, we want to demonstrate how such PPs can be realized by quantum mechanical systems.

\begin{thm}\label{thm:6}
For the $n$-cycle scenarios, $n>3$, it is always possible to construct quantum behaviors where a PP (\ref{eq:18}) occurs. For $n$ odd the simplest proof uses a qutrit system ($\mathbb{C}^3$), and works with any pure state. For $n$ even, in turn, the simplest proof uses an 2-qubit system ($\mathbb{C}^2\otimes\mathbb{C}^2$), and works with any pure state which is neither a product nor a maximally entangled state.
\end{thm}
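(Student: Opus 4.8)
The plan is to prove the statement constructively: for each $n>3$ I would exhibit an explicit quantum model for the $n$-cycle scenario in which the paradox (\ref{eq:18}), with $i=1$, holds for a particular choice of $(a,b)$ and $(\alpha_1,\dots,\alpha_{n-2})$; by Theorem \ref{thm:2} the resulting quantum behaviour is then logically contextual, so this gives a genuine inequality-free proof of quantum contextuality. Throughout I use that any pure state of a finite-dimensional system can be carried to a fixed reference vector by a unitary applied simultaneously to $\rho$ and to every projector, so it suffices to work with one convenient state and let the measurements depend on it.

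For $n$ odd I would realize the $n$-cycle version of the Cabello \textit{et al.} paradox (\ref{eq:17}), taking $(a,b)=(0,1)$ and $\alpha_m=1$ for $m$ odd, $\alpha_m=0$ for $m$ even; since $n-2$ is odd this specializes (\ref{eq:18}) to $p_1(0,1)>0$ together with $p_j(1,1)=0$ for even $j$ and $p_j(0,0)=0$ for odd $j$ in the range $3\le j\le n$. On $\mathbb{C}^3$ I take each $M_m$ to be the measurement $\{\ket{v_m}\bra{v_m},\ \mathbb{I}-\ket{v_m}\bra{v_m}\}$, with outcome $1$ the rank-one projector and $\ket{v_m}$ a real unit vector, and I impose $\bracket{v_m}{v_{m+1}}=0$ (indices modulo $n$). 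This makes the two projectors of every context commute, since their product vanishes, so the model is legitimate, and it makes $p_j(1,1)=0$ automatic for all $j$. For the reference state $\ket{\psi}$ one has $p_j(0,0)=1-|\bracket{v_j}{\psi}|^2-|\bracket{v_{j+1}}{\psi}|^2$ and $p_1(0,1)=|\bracket{v_2}{\psi}|^2$, so all that remains is to choose the closed orthogonal chain $v_1\perp v_2\perp\dots\perp v_n\perp v_1$ so that $\ket{\psi}\in\mathrm{span}\{v_j,v_{j+1}\}$ for each odd $j\le n$ and $\bracket{v_2}{\psi}\neq 0$. Geometrically the pairs $(v_3,v_4),(v_5,v_6),\dots,(v_n,v_1)$ must each be an orthonormal pair inside a plane through $\ket{\psi}$, and the only coupling constraints left are the orthogonalities $v_{2k}\perp v_{2k+1}$ joining consecutive such pairs together with $v_1\perp v_2\perp v_3$ (here $v_2\parallel v_1\times v_3$, whose component along $\ket{\psi}$ is nonzero for generic choices). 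A parameter count --- $(n-1)/2$ planes, each carrying a direction angle and an in-plane angle, against $(n-3)/2$ joining equations --- leaves a positive-dimensional solution family for every odd $n$, which can also be written down explicitly by putting the plane directions in arithmetic progression and solving for the in-plane angles recursively; since the reference state was arbitrary, this yields a qutrit proof valid for every pure input state.

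For $n$ even I would use the Hardy-type paradox, taking $(a,b)=(1,1)$ and $\alpha_m=0$ for $m$ odd, $\alpha_m=1$ for $m$ even, which (as $n-2$ is even) specializes (\ref{eq:18}) to $p_1(1,1)>0$, $p_2(1,0)=0$, $p_n(0,1)=0$, and $p_j(1,1)=0$ or $p_j(0,0)=0$ for odd or even $j$ with $3\le j\le n-1$; for $n=4$ this is precisely Hardy's paradox (\ref{eq:16}). Because $n$ is even the compatibility graph of the $n$-cycle is bipartite, so I place the odd-indexed measurements on the first qubit and the even-indexed ones on the second; then every context carries one observable from each qubit and all required commutations hold automatically. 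Up to local unitaries a pure state that is neither a product nor maximally entangled has Schmidt form $\ket{\psi}=\cos\theta\ket{00}+\sin\theta\ket{11}$ with $\theta\in(0,\pi/2)$ and $\theta\neq\pi/4$, and I take every measurement to be a real rotation in the $\{\ket{0},\ket{1}\}$ plane, the ``angle $\phi$'' measurement being the basis $\{\cos\phi\ket{0}+\sin\phi\ket{1},\ -\sin\phi\ket{0}+\cos\phi\ket{1}\}$. Writing $a_k$ and $b_k$ for the angles of the $k$-th measurement on the two sides, every vanishing-probability condition reduces to an elementary trigonometric identity: the two ``outer'' ones give $\tan b_1=\tan\theta\tan a_2$ and $\tan a_1=\tan\theta\tan b_{n/2}$, the ``$(1,1)$'' ones give $\tan a_k\tan b_k=-\tan\theta$, and the ``$(0,0)$'' ones give $\tan a_{k+1}\tan b_k=-\cot\theta$, so the constraint propagates around the cycle by $\tan a_{k+1}=\cot^2\theta\,\tan a_k$. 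Closing the loop after $n$ steps yields $\tan a_1\tan b_1=-\tan^{\,n-1}\theta$, whereas $p_1(1,1)>0$ is equivalent to $\tan a_1\tan b_1\neq-\tan\theta$; the two are compatible exactly when $\tan^{\,n-2}\theta\neq 1$, i.e.\ $\theta\neq\pi/4$, while $\theta\neq 0,\pi/2$ is precisely what keeps $p_1(1,1)$ strictly positive and all tangents finite (choosing the one remaining free parameter $\tan a_2$ generically then avoids any further degeneracy). Hence for every pure state that is neither a product nor maximally entangled one obtains a two-qubit model realizing (\ref{eq:18}).

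I expect the main obstacle to lie in the even case: one must verify that the chain of tangent identities closes consistently around the whole $n$-cycle and that the ``lucky'' probability $p_1(1,1)$ can be kept strictly positive throughout --- this is the heart of the Hardy-type argument and exactly the point at which the hypothesis ``neither a product nor maximally entangled'' turns out to be both necessary and sufficient. In the odd case the only delicate point is to confirm that a closed orthogonal chain of rays with the prescribed incidences with the planes through $\ket{\psi}$ exists for every $n$, which is handled by the parameter count (and the explicit recursive construction) indicated above.
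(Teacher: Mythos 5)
Your construction is correct and takes essentially the same route as the paper: for odd $n$ you build a closed chain of mutually orthogonal real rays in $\mathbb{C}^3$ whose consecutive pairs span planes through the state, which is exactly the recursive rotation-and-projection construction of Eq.~(\ref{eq:49}); for even $n$ your bipartite placement on two qubits with measurement angles propagating geometrically in $\tan\theta$ around the cycle is precisely what the explicit vectors of Eq.~(\ref{eq:52}) encode, including the same closure condition that isolates the product and maximally entangled states. The one soft spot --- the parameter count and genericity claim ensuring $p_1(0,1)>0$ in the odd case --- is no less rigorous than the paper's own ``with a suitable choice of angles.''
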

\begin{proof}
Let $n>3$ odd. Consider a qutrit system prepared in an arbitrary pure state $\ket{\eta}$. Define $\{\ket{v_1},\dots,\ket{v_n}\}\subset\mathbb{C}^3$ such that $\ket{v_3}$ is arbitrary since that $\bracket{v_3}{\eta}\neq 0$, and
\begin{subeqnarray}\label{eq:49}
\slabel{sub1}\ket{v_4}&=&\frac{\mathbb{I}-\ket{v_3}\bra{v_3}}{\sqrt{1-|\bracket{\eta}{v_3}|^2}}\ket{\eta}, \\ \slabel{sub2}
\ket{v_k}&=&\mathrm{R}(\theta_k,\ket{v_{k-1}})\ket{v_{k-2}}, \\ \slabel{sub3}
\ket{v_{k+1}}&=&\frac{\mathbb{I}-\ket{v_k}\bra{v_k}}{\sqrt{1-|\bracket{\eta}{v_{k}}|^2}}\ket{\eta},\\
\slabel{sub4}\ket{v_1}&=&\frac{\mathbb{I}-\ket{v_n}\bra{v_n}}{\sqrt{1-|\bracket{\eta}{v_n}|^2}}\ket{\eta}, \\ \slabel{sub5}
\ket{v_2}&=&\frac{\ket{v_1}\times \ket{v_3}}{||\hspace{0.05cm}\ket{v_1}\times \ket{v_3}\hspace{0.05cm}||},
\end{subeqnarray}
for $k\geq 5$ odd, $\theta_k\notin\{\ell\pi:\ell\in\mathbb{Z}\}$. $\mathrm{R}(\theta,\ket{x})$ denotes the rotation matrix by an angle $\theta$ around $\ket{x}$, and $\times$ denotes the usual three-dimensional cross product. From the above construction, one can verify that $\bracket{v_{k}}{v_{k+1}}=\bracket{v_1}{v_n}=0$. Hence, the compatibility relations for the odd-cycle scenario are well-defined if one consider measurements defined by projectors on vectors $\ket{v_i}$, that is $M_i:=\ket{v_i}\bra{v_i}$. From (\ref{eq:49}) it follows that
\begin{align}\label{eq:50}
\begin{cases}
p_1(0,1)\geq 0 \\
p_2(1,1)=0 \\
p_3(0,0)=0 \\
\vdots \\
p_{2k}(1,1)=0 \\
p_{2k+1}(0,0)=0 \\
\vdots \\ 
p_n(0,0)=0
\end{cases}.
\end{align}
The conditions above define a PP whenever $p_1(0,1)\neq 0$. With a suitable choice of angles, it is always possible to satisfy such a condition.

Now let $n\geq 4$ even. In this considered case, an 4-dimensional Hilbert space is needed \cite{KRK12}. For simplicity, consider a 2-qubit system in a arbitrary pure state $\ket{\eta}$. The Schmidt decomposition of $\ket{\eta}$ can be written as
\begin{eqnarray}\label{eq:51}
\ket{\eta}=\cos\alpha\ket{e_1}\otimes\ket{f_1}+\sin\alpha\ket{e_2}\otimes\ket{f_2},
\end{eqnarray}
where $\{\ket{e_1},\ket{e_2}\}$ and $\{\ket{f_1},\ket{f_2}\}$ are orthonormal basis of $\mathbb{C}^2$, and $\alpha \in [0,\pi/4]$. For $1\leq k\leq n/2$, define 
\begin{subeqnarray}\label{eq:52}
\ket{p_k}=\frac{(-1)^k(\cos\alpha)^{k-\frac{1}{2}}\ket{e_1}+\ii(\sin\alpha)^{k-\frac{1}{2}}\ket{e_2}}{\sqrt{(\cos\alpha)^{2k-1}+(\sin\alpha)^{2k-1}}}, \\ \ket{q_k}=\frac{(-1)^k(\cos\alpha)^{k-\frac{1}{2}}\ket{f_1}+\ii(\sin\alpha)^{k-\frac{1}{2}}\ket{f_2}}{\sqrt{(\cos\alpha)^{2k-1}+(\sin\alpha)^{2k-1}}},
\end{subeqnarray}
and the projective measurements
\begin{subeqnarray}\label{eq:53}
M_{\frac{n}{2}-k}&=&\ket{p_{k+2}}\bra{p_{k+2}}\otimes\mathbb{I}_2,\\
M_{\frac{n}{2}-k+1}&=&\mathbb{I}_2\otimes\ket{q_{k+1}}\bra{q_{k+1}}, \\
M_{\frac{n}{2}+1}&=&\ket{p_1}\bra{p_1}\otimes \mathbb{I}_2,\\
M_{\frac{n}{2}+2}&=&\mathbb{I}_2\otimes\ket{q_1}\bra{q_1}, \\
M_{\frac{n}{2}+k+2}&=&\ket{p_{k+1}}\bra{p_{k+1}}\otimes\mathbb{I}_2,\\
M_{\frac{n}{2}+k+3}&=&\mathbb{I}_2\otimes\ket{q_{k+2}}\bra{q_{k+2}},
\end{subeqnarray}
where $1\leq k \leq (n-4)/2$ and $n+1:=1$. From the construction above, it straightforward to notice that the compatibility relations of the $n$-cycle scenario are well defined. In addition, one may verify that
\begin{eqnarray}\label{eq:54}
\begin{cases}
p_1(1,1)\geq 0 \\
p_2(1,0)=0 \\
\vdots \\
p_{\frac{n}{2}}(1,0)=0 \\
p_{\frac{n}{2}+1}(1,1)=0 \\
p_{\frac{n}{2}+2}(0,1)=0 \\
\vdots\\p_{n-1}(0,1)=0 \\ 
p_n(0,1)=0
\end{cases}.
\end{eqnarray}
The conditions above define a PP (\ref{eq:18}) whenever $p_1(1,1)\neq 0$. From (\ref{eq:51}--\ref{eq:53}),
\begin{eqnarray}\label{eq:55}
p_1(1,1)=\left(\frac{\cos\alpha(\sin\alpha)^{n-1}-\sin\alpha(\cos\alpha)^{n-1}}{(\cos\alpha)^{n-1}+(\sin\alpha)^{n-1}}\right)^2.
\end{eqnarray}
So, $p_1(1,1)\neq 0$ if, and only if, $\alpha\notin\{0,\pi/4\}$, that is $\ket{\eta}$ is neither a product or a maximally entangled state. The largest values of $p_1(1,1)$ in (\ref{eq:55}) are plotted in Fig. \ref{fig:8}.
\end{proof}

\begin{figure}
\centering
\includegraphics[scale=0.3]{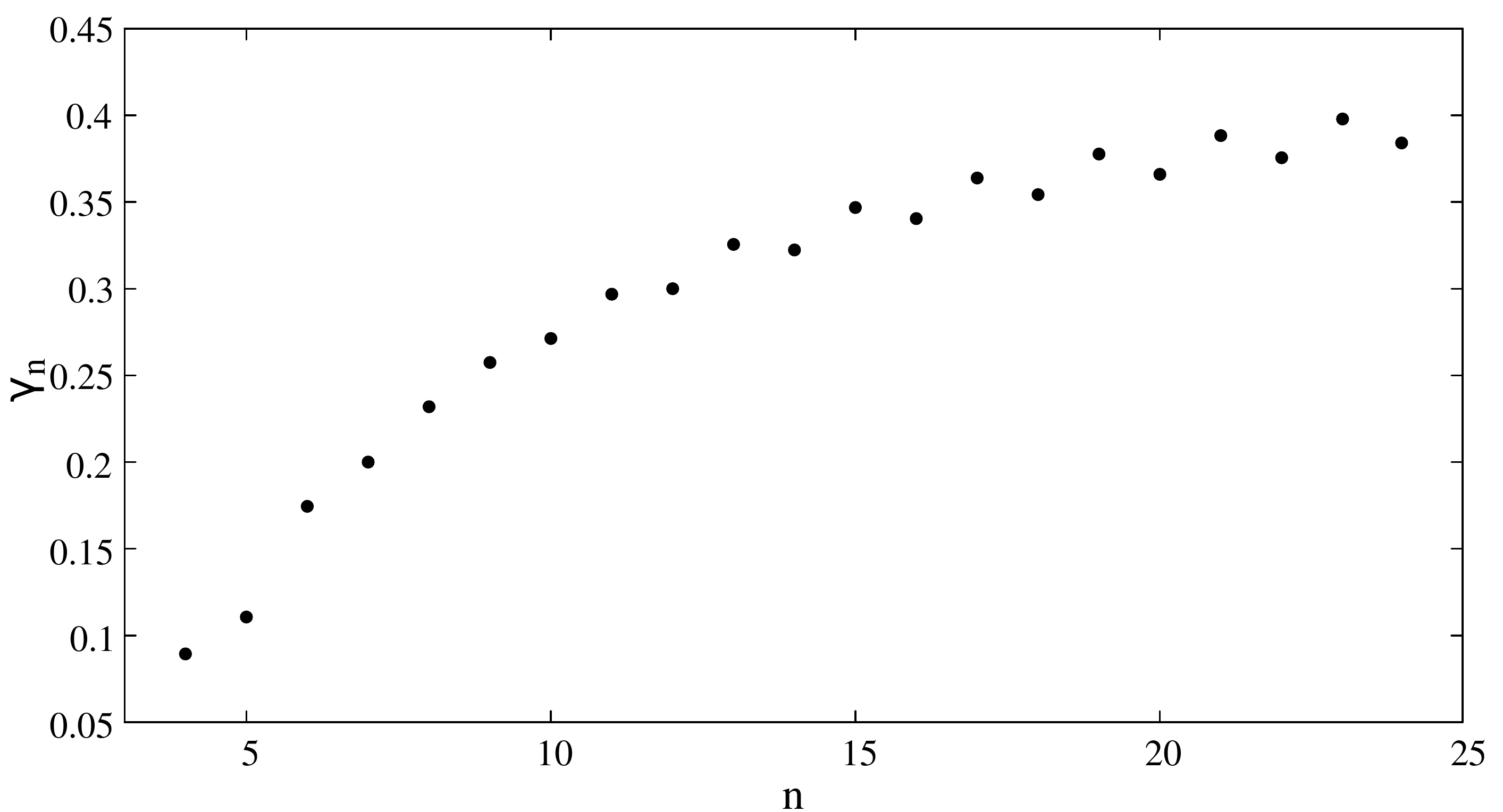}
\caption{Largest value of the non-zero probability, denoted by $\gamma_n$, in Eqs. (\ref{eq:50}) and (\ref{eq:54}).}
\label{fig:8}
\end{figure}

In the theorem above, the contradiction with NC occurs due to the fact that the probabilities $p_1(0,1)$ and $p_1(1,1)$, in Eqs. (\ref{eq:50}) and (\ref{eq:54}) respectively, can be strictly greater than zero. For $n$ odd, it is quite difficult to compute the largest possible value of $p_1(0,1)$ (which we will denote by $\gamma_n$). In fact, only for simple cases it is possible to compute it analytically. For instance, for $n=5$ we have \cite{CBTCB13}
\begin{equation}\label{eq:56}
\gamma_5=\max_{\alpha,\beta\in[0,\pi)} \frac{[\sin(2\alpha) \sin(2\beta)]^2}{[\cos\alpha\sin(2\beta)]^2+(2\sin\alpha)^2}=\frac{1}{9}.
\end{equation}
For $n=7$ the calculations are much more laborious, and one may verify that $\gamma_7=1/5$. For higher values of $n$ we numerically estimate $\gamma_n$ using Mathematica ® (see Fig. \ref{fig:8}). We verify that $\gamma_9\approx 0.257371$, which is greater than $(1+16/\sqrt{27})^{-1}$, claimed by Cabello \textit{et al.} \cite{CBTCB13} to be the value of $\gamma_9$. For the even-cycle scenarios, on the other hand, the choice of projectors is much less general. In fact, for $n$ odd we have $(n-1)/2$ "free parameters" ($(n-3)/2$ angles $\theta_k$, and the angle between $\ket{\eta}$ and $\ket{v_3}$), while for $n$ even we have only one (the angle $\alpha$). The corresponding $\gamma_n$ value is obtained by maximizing $p_1(1,1)$ in (\ref{eq:55}) (see Fig. \ref{fig:8}).

\section{Discussion and Further Steps}\label{sec:V}

In this paper we have investigated logical contextuality and inequality-free proofs. We demonstrated that the occurrence of specific sets of possibilistic conditions is a necessary and sufficient condition for logical contextuality in the $n$-cycle scenarios [Theorem \ref{thm:2}], for simple scenarios with dichotomic measurements [Theorem \ref{thm:3}], and for general $n$-cycle scenarios [Theorem \ref{thm:4}]. As a consequence of this result we concluded that the only strongly contextual behaviors for the $n$-cycle scenarios are generalizations of the PR-boxes [Theorem \ref{thm:5}]. Also for the $n$-cycle scenarios, we demonstrated that it is always possible to construct a "Hardy-like proof" to quantum contextuality with a qutrit system (for $n$ odd) and an 2-qubit system (for $n$ even).

The results that we have established in the present paper help to better understand the connection between logical contextuality and inequality-free proofs. In particular, we define the concept of possibilistic paradoxes [Def.  \ref{defn:IV.1}], that are the possibilistic analogue of noncontextuality inequalities. In this way, we find sets of possibilistic paradoxes whose occurrence is a necessary and sufficient condition for logical contextuality in the mentioned scenarios, which can be understood as the possible analogue of H-representations of the noncontextual polytope.

Since logical contextuality is the most general form of possibilistic non-classicality, our results can be used to explain any inequality-free proof to quantum contextuality or nonlocality. For instance, Hardy's nonlocality paradox \cite{Hardy92,Hardy93} can be recovered by theorem \ref{thm:6} by taking $n=4$. In the same way, Cabello \textit{et al.} proof can be recovered by taking $n=5$, $\ket{\eta}=(1,1,1)^T/\sqrt{3}$, $\ket{v_3}=(1,0,0)^T$,and $\theta_5=\pi/4$ in (\ref{eq:49}), where $T$ means transposition. Hardy ladder proof \cite{BBMH97}, in turn, can be constructed by using the measurement operators we have defined in the proof of theorem \ref{thm:6} for $n$ even. It is not yet clear, however, how possibilistic paradoxes can be realized by quantum theory in more general scenarios with non-dichotomic measurements.

The results presented in this work do not apply to scenarios whose contexts have more than two measurements. We believe that results similar to the theorems \ref{thm:2} and \ref{thm:3} cannot be stated in general. The reason for this is that the number of possible "types" of possibilistic paradoxes like (\ref{eq:18}) rapidly increases as the complexity of the scenario increases. For instance, in a tripartite Bell scenarios, a "brute-force calculation" shows that there exist, in addition to those considered in Eq. (\ref{eq:36}), all possible variations of the PP defined in Ref. \cite{WM12}.  Therefore, unlike the simple scenarios, where there is a friendly and closed form the possibilistic paradoxes whose occurrence is necessary and sufficient for logical contextuality, for more general scenarios there may be quite a number of different types. Another possible explanation can be founded in Ref. \cite{Simmons18}.

The fact that noncontextual assignments to observable properties lead to contradictions with quantum predictions is a well-known fact since the seminal paper by Kochen-Specker (KS) \cite{KS67}. In particular, KS demonstrated the existence of a finite set of projectors in $\mathbb{R}^3$ that are not compatible with noncontextual assignment of values (true/false). The first step in the proof consists of identifying a set of eight vectors whose
orthogonality relations are represented by the “KS-bug” graph. This construction has a peculiar property known as true-implies-false (TIF): there exist vertices $A$ and $B$ such that whenever $A$ is true, then $B$ must be false. In addition of being at the basis of several KS-type contradictions \cite{BCGKL}, "KS-bug" is the simplest example of TIF structure \cite{BPSS18}. There is a strong connection between these structures and inequality-free proofs, since the vectors used in the Hardy's inequality-free proof \cite{Hardy93} appear also in the famous "18 vectors proof" proposed in Ref. \cite{CSA96}, and the aforementioned vectors appearing in the Cabello \textit{et al.} inequality-proof \cite{CBTCB13}, together with the state vector and more two additional vectors, define a KS-bug.

Finally, another interesting problem that can be explored is the "Hardy-Tsirelson problem". This problem consists in finding the maximum value allowed by quantum theory for the non-zero probability in possibilistic paradoxes. We know, for instance, that $\gamma_4=(5\sqrt{5}-11)/2$ is the Hardy-Tsirelson bound for the CHSH scenario \cite{RZS12}. The behavior of the $\gamma_n$ in Fig. \ref{fig:8}, however, suggests that it does not correspond to the Tsirelson bound value for all $n$.

\begin{acknowledgments}
The authors thank the members of Quantum Foundations group (IF-USP) for the discussions and comments; Dr. Jader for his help with Mathematica; and Profs. C. Budroni, E. Dzhafarov, M. Terra Cunha, S. Abramsky, and the anonymous referees for the comments and suggestions.  This work was supported by the Conselho Nacional de Pesquisa Científica (CNPq) through Grant No. 155496/2019-0.
\end{acknowledgments}

\end{document}